	\pgfplotsset{compat=1.12}
	\crefname{equation}{}{}
\theoremstyle{plain}
	\newtheorem{mtheorem}{Theorem}
	\newtheorem{theorem}{Theorem}[section]
	\newtheorem{lemma}[theorem]{Lemma}
	\newtheorem{corollary}[theorem]{Corollary}
\theoremstyle{definition}
	\newtheorem{remark}[theorem]{Remark}
	\newtheorem{example}[theorem]{Example}
	\newenvironment{eg}{\begin{footnotesize} \begin{example}}{\end{example}\end{footnotesize}}
\newcommand{\N}{\mathbb{N}}
\newcommand{\Z}{\mathbb{Z}}
\newcommand{\R}{\mathbb{R}}
\newcommand{\C}{\mathbb{C}}
\newcommand{\T}{\mathbb{T}}
\newcommand{\rL}{\mathrm{L}}
\newcommand{\rR}{\mathrm{R}}
\newcommand{\cA}{\mathcal{A}}
\newcommand{\cF}{\mathcal{F}}
\newcommand{\cB}{\mathcal{B}}
\newcommand{\cK}{\mathcal{K}}
\newcommand{\cH}{\mathcal{H}}
\newcommand{\sgn}{\mathrm{sgn}\,}
\newcommand{\inner}[1]{\left \langle #1 \right \rangle}
\newcommand{\wn}{\mathrm{wn}}
\renewcommand{\Re}{\mathrm{Re}\,}
\renewcommand{\Im}{\mathrm{Im}\,}
\newcommand{\ind}{\mathrm{ind}\,}
\newcommand{\ess}{\sigma_{\mathrm{ess}}}
\newcommand{\textbi}[1]{\textit{\textbf{#1}}}
\begin{document}

\begin{frontmatter}

\title{
A Constructive Approach to Topological Invariants for One-dimensional Strictly Local Operators
}

\author[Shinshu]{Yohei Tanaka\corref{corresponding}}
\ead{20hs602a@shinshu-u.ac.jp}

\cortext[corresponding]{Corresponding author}

\address[Shinshu]{Division of Mathematics and Physics, Faculty of Engineering, Shinshu University, Wakasato, Nagano 380-8553, Japan}

\begin{abstract}
In this paper we shall focus on one-dimensional strictly local operators, the notion of which naturally arises in the context of discrete-time quantum walks on the one-dimensional integer lattice $\Z.$ In particular, we give an elementary constructive approach to the following two topological invariants associated with such operators: Fredholm index and essential spectrum. As a direct application, we shall explicitly compute and fully classify these topological invariants for a well-known quantum walk model.
\end{abstract}

\begin{keyword}
Strictly local operators \sep Fredholm index \sep Essential spectrum \sep Toeplitz operators \sep Quantum walks
\end{keyword}
\end{frontmatter}

\section{Introduction}

The underlying Hilbert space of this manuscript is $\ell^2(\Z, \C^n)$ of square-summable $\C^n$-valued sequences indexed by the set $\Z$ of all integers. With the obvious orthogonal decomposition $\ell^2(\Z, \C^n) = \bigoplus_{j=1}^n \ell^2(\Z, \C)$ in mind, we shall consider the following \textit{finite} sum of operators;
\begin{equation}
\label{equation2: characterisation of strict locality}
A
=
\sum^k_{y=-k}
\begin{pmatrix}
a_{11}(y, \cdot) L^{y} & \dots & a_{1n}(y, \cdot) L^{y} \\
\vdots & \ddots& \vdots \\
a_{n1}(y, \cdot) L^{y} & \dots & a_{nn}(y, \cdot) L^{y} \\
\end{pmatrix}
=
\begin{pmatrix}
\sum^k_{y=-k} a_{11}(y, \cdot) L^{y} & \dots & \sum^k_{y=-k} a_{1n}(y, \cdot) L^{y} \\
\vdots & \ddots& \vdots \\
\sum^k_{y=-k} a_{n1}(y, \cdot) L^{y} & \dots & \sum^k_{y=-k} a_{nn}(y, \cdot) L^{y} \\
\end{pmatrix},
\end{equation}
where $L$ denotes the bilateral left-shift operator on $\ell^2(\Z, \C)$ (see Equation \cref{definition: bileteral shift} for definition) and where each $a_{ij}(y, \cdot) = (a_{ij}(y, x))_{x \in \Z}$ is an arbitrary bounded $\C$-valued sequence viewed as a multiplication operator on $\ell^2(\Z, \C) = \bigoplus_{x \in \Z} \C.$ An operator the form \cref{equation2: characterisation of strict locality} is known as a (one-dimensional) \textbi{strictly local operator} \cite[\textsection 1.2]{Cedzich-Geib-Stahl-Velazquez-Werner-Werner-2018}. Such an operator naturally arises, for example, in the context of $n$-state quantum walks defined on the integer lattice $\Z,$ where we regard $\ell^2(\Z, \C^n)$ as the state Hilbert space of the walker. The purpose of this paper is to prove the following general theorem;

\begin{mtheorem}
\label{theorem: two-phase case}
Let $A$ be a strictly local operator of the form \cref{equation2: characterisation of strict locality} with the property that the following two-sided limits exist:
\begin{equation}
\label{equation: two-phase assumptions}
\tag{A1}
a_{ij}(y, \pm \infty) := \lim_{x \to \pm \infty} a_{ij}(y,x) \in \C, \qquad i,j = 1, \dots, n ,\ -k \leq y \leq k.
\end{equation}
Let
\begin{align}
\label{equation: definition of Apm}
\tag{A2}
A(\pm \infty)
&:=
\begin{pmatrix}
\sum^k_{y=-k} a_{11}(y, \pm \infty) L^{y} & \dots & \sum^k_{y=-k} a_{1n}(y, \pm \infty) L^{y} \\
\vdots & \ddots& \vdots \\
\sum^k_{y=-k} a_{n1}(y, \pm \infty) L^{y} & \dots & \sum^k_{y=-k} a_{nn}(y, \pm \infty) L^{y} \\
\end{pmatrix}, \\
\label{equation: definition of hatA}
\tag{A3}
\hat{A}(z,\pm \infty)
&:=
\begin{pmatrix}
\sum^k_{y=-k} a_{11}(y, \pm \infty) z^{y} & \dots & \sum^k_{y=-k} a_{1n}(y, \pm \infty) z^{y} \\
\vdots & \ddots& \vdots \\
\sum^k_{y=-k} a_{n1}(y, \pm \infty) z^{y} & \dots & \sum^k_{y=-k} a_{nn}(y, \pm \infty) z^{y} \\
\end{pmatrix}, \qquad z \in \T.
\end{align}
Then the following assertions hold true:
\begin{enumerate}[(i)]
\item We have that $A$ is Fredholm if and only if $\T \ni z \longmapsto \det \hat{A}(z,\star) \in \C$ is nowhere vanishing on $\T$ for each $\star = \pm \infty.$ In this case, the Fredholm index of $A$ is given by
\begin{equation}
\label{equation: bulk-edge correspondence}
\tag{A4}
\ind(A) = \wn \left(\det \hat{A}(\cdot,+\infty) \right) - \wn \left(\det \hat{A}(\cdot, -\infty) \right),
\end{equation}
where $\wn \left(\det \hat{A}(\cdot, \star) \right)$ denotes the winding number of the function $\T \ni z \longmapsto \det \hat{A}(z,\star) \in \C$ with respect to the origin.

\item The essential spectrum of $A$ is given by
\begin{align}
\label{equation1: essential spectrum of two-phase operator}
\tag{A5}
&\ess(A) = \ess(A(+ \infty)) \cup \ess(A(- \infty)), \\
\label{equation2: essential spectrum of two-phase operator}
\tag{A6}
&\ess(A(\star)) = \bigcup_{z \in \T} \sigma \left(\hat{A}(z,\star)\right), \qquad \star = \pm \infty.
\end{align}
\end{enumerate}
\end{mtheorem}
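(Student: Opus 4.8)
The plan is to peel off a compact operator so that $A$ becomes a direct sum of two half-line block Toeplitz operators --- one assembled from the data at $+\infty$, the other from the data at $-\infty$ --- and then to read off both the Fredholm index and the essential spectrum from the classical Toeplitz calculus applied to the Laurent-polynomial symbols $\hat A(\,\cdot\,,\pm\infty)$ of \cref{equation: definition of hatA}. First I would let $\Theta$ be the orthogonal projection of $\ell^2(\Z,\C^n)$ onto $\ell^2(\Z_{\geq 0},\C^n)$, i.e.\ componentwise multiplication by $\chi_{\Z_{\geq 0}}$, and note that by \cref{equation: two-phase assumptions}, for each $i,j$ and each $-k\le y\le k$ the sequence $x\mapsto a_{ij}(y,x)-\chi_{\Z_{\geq 0}}(x)\,a_{ij}(y,+\infty)-\chi_{\Z_{<0}}(x)\,a_{ij}(y,-\infty)$ tends to $0$ as $|x|\to\infty$; hence the associated diagonal multiplication operator is compact, and multiplying by $L^{y}$, summing the finitely many $y$, and carrying this out entrywise in \cref{equation2: characterisation of strict locality} gives
\[
A=\Theta\,A(+\infty)+(I-\Theta)\,A(-\infty)+K,\qquad K\ \text{compact},
\]
with $A(\pm\infty)$ as in \cref{equation: definition of Apm}.

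Next I would discard the finite-rank corners. Because each shift $L^{y}$ with $|y|\le k$ moves supports by at most $k$, the operators $\Theta A(+\infty)(I-\Theta)$ and $(I-\Theta)A(-\infty)\Theta$ have ranges contained in the $nk$-dimensional coordinate subspaces indexed by $\{0,\dots,k-1\}$ and $\{-k,\dots,-1\}$ respectively, so they are finite rank. Hence, with respect to $\ell^2(\Z,\C^n)=\ell^2(\Z_{\geq 0},\C^n)\oplus\ell^2(\Z_{<0},\C^n)$,
\[
A\equiv T_{+}\oplus T_{-}\pmod{\text{compacts}},
\]
where $T_{+}$ is the compression of the block-Laurent operator $A(+\infty)$ to $\ell^2(\Z_{\geq 0},\C^n)$ and $T_{-}$ that of $A(-\infty)$ to $\ell^2(\Z_{<0},\C^n)$; after a Fourier transform on the first summand, and the reflection $x\mapsto-1-x$ followed by a Fourier transform on the second, these become block Toeplitz operators built from the continuous --- indeed Laurent-polynomial --- matrix symbols $\hat A(\,\cdot\,,+\infty)$ and $\hat A(\,\cdot\,,-\infty)$ (possibly after the substitution $z\mapsto z^{-1}$, which only reverses the winding number of the determinant).

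Assertion (i) would then follow by combining invariance of Fredholmness and of the index under compact perturbations and finite direct sums with the classical block Toeplitz theorem: a block Toeplitz operator $T(\Phi)$ with continuous matrix symbol $\Phi$ is Fredholm if and only if $\det\Phi$ is zero-free on $\T$, and then $\ind T(\Phi)=-\wn(\det\Phi)$ --- which, for the Laurent-polynomial symbols here, is just an argument-principle count of the zeros of $z\mapsto\det\hat A(z,\star)$ inside the unit disc against the order of the pole of that function at the origin. The two half-lines carry opposite orientations relative to $L$, so the count enters with opposite signs (as one checks on the model symbols $z^{m}I_{n}$): $\ind T_{+}=\wn(\det\hat A(\,\cdot\,,+\infty))$ and $\ind T_{-}=-\wn(\det\hat A(\,\cdot\,,-\infty))$, and their sum is \cref{equation: bulk-edge correspondence}. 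The Fredholm criterion of (i) is the statement that $T_{\pm}$ is Fredholm exactly when $\det\hat A(\,\cdot\,,\pm\infty)$ is zero-free on $\T$.

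For (ii) I would apply (i) to $A-\lambda I$, which is again of the form \cref{equation2: characterisation of strict locality}, still satisfies \cref{equation: two-phase assumptions}, and has associated symbol matrices $\hat A(z,\star)-\lambda I_{n}$: thus $A-\lambda I$ fails to be Fredholm precisely when $\lambda\in\sigma(\hat A(z,\star))$ for some $z\in\T$ and some $\star=\pm\infty$, i.e.\ when $\lambda\in\bigcup_{z\in\T}\sigma(\hat A(z,+\infty))\cup\bigcup_{z\in\T}\sigma(\hat A(z,-\infty))$. Applying (i) instead to the constant-coefficient operator $A(\star)$ --- whose two phases coincide, so that the two winding numbers in \cref{equation: bulk-edge correspondence} cancel and $A(\star)-\lambda I$ has index $0$ whenever it is Fredholm --- identifies $\bigcup_{z\in\T}\sigma(\hat A(z,\star))$ with $\ess(A(\star))$, which is \cref{equation2: essential spectrum of two-phase operator}; combining the two statements yields \cref{equation1: essential spectrum of two-phase operator}, and both right-hand sides are automatically closed since $\{(z,\lambda)\in\T\times\C:\det(\hat A(z,\star)-\lambda I_{n})=0\}$ is compact. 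The reductions above are routine once the two-phase cancellation of the first paragraph is isolated; granting the block Toeplitz index theorem, the only genuinely delicate point is the orientation bookkeeping that makes the left half-line contribute with a minus sign --- precisely what turns a sum of winding numbers into the difference appearing in \cref{equation: bulk-edge correspondence}. If that index theorem is instead to be proved from scratch, in the constructive spirit of the paper, then the argument-principle count noted above is where the real work lies.
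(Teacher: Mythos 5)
Your proposal is correct and follows essentially the same route as the paper: write $A$, modulo compact operators, as the direct sum of the half-line compressions of the uniform limit operators $A(\pm\infty)$ (the paper does this via \cref{corollary: properties of strictly local operators} and \cref{lemma: theorem A}), identify these compressions with block Toeplitz operators whose symbols are $\hat{A}(z,-\infty)$ and $\hat{A}(z^{-1},+\infty)$ as in \cref{lemma: fourier transform of compression is toeplitz operator}, and invoke the Gohberg--Krein theorem — and your orientation bookkeeping, giving $\ind T_{+}=\wn(\det\hat{A}(\cdot,+\infty))$ and $\ind T_{-}=-\wn(\det\hat{A}(\cdot,-\infty))$, agrees with the paper's. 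The only genuine (and minor) divergence is in part (ii), where you deduce the essential-spectrum formula by applying part (i) to $A-\lambda$ and to $A(\star)-\lambda$, whereas the paper instead quotes the $C^*$-algebraic symbol calculus for $\ess(T_F)$ in \cref{theorem: Gohberg-Krein}(ii); your route is slightly more self-contained and equally valid.
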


Note that the Fredholm index and essential spectrum can be viewed as topological invariants in the sense that these are stable under compact perturbations as is well-known. \cref{theorem: two-phase case} allows us to explicitly compute these topological invariants for strictly local operators, both of which depend only on the asymptotic values \cref{equation: two-phase assumptions}. As we shall see in this article, \cref{theorem: two-phase case}(i) can be viewed as an abstract version of the one-dimensional bulk-boundary correspondence (see, for example, \cite[Corollary 4.3]{Cedzich-Geib-Grunbaum-Stahl-Velazquez-Werner-Werner-2018}). In the setting of discrete-time quantum walks, a standard approach to \cref{theorem: two-phase case}(ii) involves the use of the discrete Fourier transform \cref{equation: definition of hatA} and Weyl's criterion for the essential spectrum; see, for example, \cite[Lemma 3.3]{Fuda-Funakawa-Suzuki-2017}, where the essential spectrum of the self-adjoint discriminant operator is determined via a certain spectral mapping theorem for quantum walks \cite{Segawa-Suzuki-2016,Segawa-Suzuki-2019}. Weyl's criterion is capable of dealing with a wide range of perturbations that are not necessarily compact (see, for example, \cite{Sasaki-Suzuki-2017}), but its usage is obviously restricted to normal operators, unlike \cref{theorem: two-phase case}(ii).

This paper is organised as follows. Proof of \cref{theorem: two-phase case} is given in \cref{section: analysis of strictly local operators}. In particular, our proof of the index formula \cref{equation: bulk-edge correspondence} is entirely motivated by \cite{Matsuzawa-2020}, where the special case $n=1$ of the index formula is established by making use of the notion of Toeplitz operators with $\C$-valued symbols. We show that \cref{theorem: two-phase case}(ii) can be proved without relying on Weyl's criterion, if we allow Toeplitz operators to have $\C^{n \times n}$-valued symbols. The paper concludes with \cref{section: examples}, where we explicitly compute and fully classify the \textbi{Witten index} of a certain quantum walk model on $\ell^2(\Z,\C^2)$ with the aid of \cref{theorem: two-phase case}(i). This is the one-dimensional split-step quantum walk model considered in \cite{Fuda-Funakawa-Suzuki-2017,Fuda-Funakawa-Suzuki-2018,Fuda-Funakawa-Suzuki-2019,Suzuki-Tanaka-2019,Matsuzawa-2020} with a modification that all parameters of the model depend freely on $\Z.$ It is shown that this seemingly minor modification leads to the new index formula taking values from $\{-2,-1,0,+1,+2\},$ where the indices $\pm 2$ do not appear in the existing literature \cite{Suzuki-Tanaka-2019,Matsuzawa-2020}. This result turns out to be significant improvement, since the Witten index gives a lower bound for the number of \textit{topologically protected bound states} in the sense of \cref{section: topologically protected bound states}. As a direct application of \cref{theorem: two-phase case}(ii), we shall also compute the essential spectrum of the associated time-evolution operator.

\section*{Acknowledgements}

The author would like to thank the members of the Shinshu Mathematical Physics Group for extremely useful comments and discussions. In particular, his sincerely thanks goes to K.~Wada for pointing out that \cite[Theorem 8]{Suzuki-Tanaka-2019} can be generalised to \cref{lemma: wada decomposition}. He would also like to thank M.~Seki and K.~Asahara for carefully checking the manuscript. This research was supported by the Ministry of Education, Science, Sports and Culture, Grant-in-Aid for JSPS Fellows, 20J22684, 2020. This work was also partially supported by the Research Institute for Mathematical Sciences, an International Joint Usage/Research Center located in Kyoto University.

\section{Analysis of Strictly Local Operators}
\label{section: analysis of strictly local operators}

\subsection{Notation and terminology}
By operators we shall always mean everywhere-defined bounded linear operators between Banach spaces throughout this paper. An operator $A$ on a Hilbert space $\cH$ is said to be \textbi{Fredholm}, if $\ker A, \ker A^*$ are finite-dimensional and if $A$ has a closed range. Given such $A,$ we define the \textbi{Fredholm index} of $A$ by $\ind(A) := \dim \ker A - \dim \ker A^*.$ It is well-known that the Fredholm index is invariant under compact perturbations. That is, given an operator $A$ on $\cH$ and a compact operator $K$ on $\cH,$ we have that $A$ is Fredholm if and only if so is $A + K,$ and in this case $\ind(A) = \ind(A + K).$ The (Fredholm) \textbi{essential spectrum} of an operator $A$ on $\cH$ is defined as the set $\ess(A)$ of all $\lambda \in \C,$ such that $A - \lambda$ fails to be Fredholm. Note that  $\ess(A)$ is also stable under compact perturbations.

The Hilbert space of all square-summable $\C$-valued sequences $\Psi = (\Psi(x))_{x \in \Z}$ is denoted by the shorthand $\ell^2(\Z) := \ell^2(\Z,\C).$ We have a natural orthogonal decomposition $\ell^2(\Z) = \ell^2_\rL(\Z) \oplus \ell^2_\rR(\Z),$ where
\[
\ell^2_{\rL}(\Z) := \{\Psi \in \ell^2(\Z) \mid \Psi(x) = 0 \,\, \forall x \geq 0 \}, \qquad
\ell^2_\rR(\Z) := \{\Psi \in \ell^2(\Z) \mid \Psi(x) = 0 \,\, \forall x < 0\}.
\]
The orthogonal projections of $\ell^2(\Z)$ onto the above subspaces shall be denoted by $P_{\rL}$ and $P_{\rR} = 1 - P_{\rL}$ respectively. For each $\sharp  \in \{\rL,\rR\},$ the orthogonal projection $P_{\sharp}$ can be written as $P_{\sharp} = \iota_{\sharp} \iota_{\sharp}^*,$ where $\iota_{\sharp} : \ell^2_{\sharp}(\Z) \hookrightarrow \ell^2(\Z)$ is the inclusion mapping. The left-shift operator $L$ on $\ell^2(\Z)$ is defined by
\begin{equation}
\label{definition: bileteral shift}
L \Psi := \Psi(\cdot + 1), \qquad \Psi \in \ell^2(\Z).
\end{equation}

Let $n \in \N$ be fixed throughout the current section. Any operator $A$ on $\ell^2(\Z, \C^n) := \bigoplus_{j=1}^n \ell^2(\Z)$ admits the following unique block-operator matrix representation;
\begin{equation}
\label{equation: standard representation of A}
A =
\begin{pmatrix}
A_{11} & \dots & A_{1n} \\
\vdots & \ddots& \vdots \\
A_{n1} & \dots & A_{nn} \\
\end{pmatrix}_{\bigoplus_{j=1}^n \ell^2(\Z)},
\end{equation}
where each $A_{ij}$ is an operator on $\ell^2(\Z).$ We shall agree to use the shorthand $A = (A_{ij})$ to mean that \cref{equation: standard representation of A} holds true. With this representation of $A$ in mind, for each $\sharp  = \rL, \rR,$ we define the following operator on $\ell^2_{\sharp}(\Z, \C^n) := \bigoplus_{j=1}^n \ell^2_{\sharp}(\Z);$
\begin{equation}
\label{equation: matrix repesentation of contraction}
A_{\sharp} :=
\begin{pmatrix}
\iota_{\sharp}^* A_{11}\iota_{\sharp} & \dots & \iota_{\sharp}^* A_{1n}\iota_{\sharp} \\
\vdots & \ddots& \vdots \\
\iota_{\sharp}^* A_{n1}\iota_{\sharp} & \dots & \iota_{\sharp}^* A_{nn}\iota_{\sharp} \\
\end{pmatrix}_{\bigoplus_{j=1}^n \ell_\sharp^2(\Z)}.
\end{equation}

\subsection{Strictly local operators}

\begin{lemma}
\label{lemma: characterisation of strict locality}
Let $(\delta_x)_{x \in \Z}$ be the standard complete orthonormal basis for $\ell^2(\Z),$ and let $A$ be an operator on $\ell^2(\Z, \C^n)$ with the block-operator matrix representation \cref{equation: standard representation of A}. Then the following are equivalent:
\begin{enumerate}[(i)]
\item For each $i,j \in \{1, \dots, n\},$ the operator $A_{ij}$ is a finite sum of the form $A_{ij} = \sum^k_{y=-k} a_{ij}(y, \cdot)L^y$ for some $\C$-valued sequences $a_{ij}(y, \cdot) = (a_{ij}(y, x))_{x \in \Z},$ where $-k \leq y \leq k,$ viewed as multiplication operators on $\ell^2(\Z) = \bigoplus_{x \in \Z} \C.$

\item There exists a large enough positive integer $k,$ such that for each $x \in \Z$ and for each $i,j \in \{1, \dots, n\},$ the vector $A_{ij} \delta_x \in \ell^2(\Z)$ belongs to the linear span of the finite set $\{ \delta_{x-y} \mid -k \leq y \leq k \}.$
\end{enumerate}
\end{lemma}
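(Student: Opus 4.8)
Since the claimed equivalence is block by block, the content of the lemma is entirely scalar: for a bounded operator $B$ on $\ell^2(\Z)$ and a fixed positive integer $k$, one has $B = \sum_{y=-k}^{k} b(y,\cdot) L^{y}$ for some bounded $\C$-valued sequences $b(y,\cdot)$ if and only if $B\delta_x \in \spa\{\delta_{x-y} : -k\le y\le k\}$ for every $x\in\Z$. The only two computational inputs are the identity $L^{y}\delta_x = \delta_{x-y}$, valid for all $y\in\Z$ (for $y\ge 0$ it follows by induction from $L\Psi=\Psi(\cdot+1)$, and for $y<0$ from the fact that $L$ is unitary with $L^{-1}\Psi=\Psi(\cdot-1)$), together with the diagonal action $b(y,\cdot)\delta_m = b(y,m)\,\delta_m$ of a multiplication operator on the basis.

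The implication (i)$\Rightarrow$(ii) is then immediate: applying $A_{ij}=\sum_{y=-k}^{k} a_{ij}(y,\cdot)L^{y}$ to $\delta_x$ and using the two facts above gives $A_{ij}\delta_x = \sum_{y=-k}^{k} a_{ij}(y,x-y)\,\delta_{x-y}$, which lies in $\spa\{\delta_{x-y}:-k\le y\le k\}$; no boundedness beyond the standing hypotheses is used here.

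For (ii)$\Rightarrow$(i), the plan is to recover the sequences $a_{ij}(y,\cdot)$ from the matrix coefficients of $A_{ij}$ in the basis $(\delta_x)$. The span hypothesis lets me write, for each $x$, $A_{ij}\delta_x=\sum_{y=-k}^{k}c_{ij}(y,x)\,\delta_{x-y}$ with $c_{ij}(y,x)=\inner{\delta_{x-y},A_{ij}\delta_x}$, and I would then \emph{define} $a_{ij}(y,x):=c_{ij}(y,x+y)$. Feeding $\delta_x$ into $\sum_{y=-k}^{k}a_{ij}(y,\cdot)L^{y}$ and again invoking $L^{y}\delta_x=\delta_{x-y}$ and the diagonal action reproduces exactly $\sum_{y=-k}^{k}c_{ij}(y,x)\,\delta_{x-y}=A_{ij}\delta_x$, so the two operators agree on every basis vector.

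Two verifications remain, and this is where the (mild) real work sits. First, each $a_{ij}(y,\cdot)$ must be a bounded sequence: since $A_{ij}$ arises from the everywhere-defined bounded operator $A$ by composing with the norm-one inclusion of the $j$-th summand and the norm-one projection onto the $i$-th summand, one has $\|A_{ij}\|\le\|A\|$, whence $|c_{ij}(y,x)|=|\inner{\delta_{x-y},A_{ij}\delta_x}|\le\|A_{ij}\|\le\|A\|$ uniformly in $x$ and $y$. Second, agreement on the basis must be promoted to equality of operators: the operator $\sum_{y=-k}^{k}a_{ij}(y,\cdot)L^{y}$ is a finite sum of bounded operators (bounded multiplication operators composed with the unitary $L$), hence bounded, and two bounded operators that coincide on the total set $\{\delta_x:x\in\Z\}$ coincide on its dense linear span and therefore on all of $\ell^2(\Z)$. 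I expect this boundedness-and-density bookkeeping to be the main obstacle, modest as it is; the genuine idea is the index shift $a_{ij}(y,x)=c_{ij}(y,x+y)$ that converts the ``diagonal-plus-shift'' description of each $A_{ij}\delta_x$ into the ``shift-with-multiplication-coefficient'' form required by (i).
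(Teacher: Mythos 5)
Your proposal is correct and follows essentially the same route as the paper: the forward direction via $L^{y}\delta_x=\delta_{x-y}$, and the converse via reading off the matrix coefficients $\inner{\delta_{x-y},A_{ij}\delta_x}$, bounding them by $\|A_{ij}\|$, and performing the index shift $a_{ij}(y,x)=c_{ij}(y,x+y)$. The only difference is that you make explicit the (routine) density argument promoting agreement on the basis to operator equality, which the paper leaves implicit.
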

Following \cite[\textsection 1.2]{Cedzich-Geib-Stahl-Velazquez-Werner-Werner-2018}, any operator $A$ satisfying the above equivalent conditions shall be referred to as a \textbi{strictly local operator} from here on. It follows from (i) that such $A$ admits a block-matrix representation of the form \cref{equation2: characterisation of strict locality}.
\begin{proof}
It is obvious that (i) implies (ii), since $L^y \delta_x  = \delta_{x - y}$ for each $x,y \in \Z.$ This equality shall be repeatedly used throughout the current section. To prove the converse, let (ii) hold true, and let $i,j$ be both fixed. For each $x \in \Z,$ there exist finitely many scalars $a'_{ij}(y, x) \in \C,$ where $-k \leq y \leq k,$ such that
\begin{equation}
\label{equation3: characterisation of strict locality}
A_{ij} \delta_x = \sum_{y=-k}^k a'_{ij}(y, x) \delta_{x-y}.
\end{equation}
Note that $a'_{ij}(y, \cdot) = (a'_{ij}(y, x))_{x \in \Z}$ is a bounded sequence for $-k \leq y \leq k;$
\[
|a'_{ij}(y, x)| = |\inner{\delta_{x - y}, A_{ij}\delta_{x}}_{\ell^2(\Z)}| \leq \|A_{ij}\| \|\delta_{x-y}\|_{\ell^2(\Z)}  \|\delta_{x}\|_{\ell^2(\Z)} \leq \|A_{ij}\|, \qquad x \in \Z.
\]
Let $a_{ij}(y,x) := a'_{ij}(y, x + y)$ for each $x, y.$ Then we obtain the following equality for each $x \in \Z;$
\[
\sum^k_{y=-k} a_{ij}(y, \cdot) L^{y} \delta_x
= \sum^k_{y=-k} a_{ij}(y, \cdot) \delta_{x-y}
= \sum^k_{y=-k} a_{ij}(y, x-y) \delta_{x-y}
= \sum^k_{y=-k} a'_{ij}(y, x) \delta_{x-y}
= A_{ij} \delta_x,
\]
where the last equality follows from \cref{equation3: characterisation of strict locality}. That is, (i) holds true.
\end{proof}

\begin{corollary}
\label{corollary: properties of strictly local operators}
If $A$ is a strictly local operator on $\ell^2(\Z, \C^n),$ then the difference $A - A_{\rL} \oplus A_{\rR}$ is finite-rank. Moreover, the following assertions hold true:
\begin{enumerate}[(i)]
\item The operator $A$ is Fredholm if and only if $A_\rL, A_{\rR}$ are both Fredholm. In this case, we have
\begin{equation}
\label{equation: fredholm index of strictly local operators}
\ind(A) = \ind(A_{\rL}) + \ind(A_{\rR}).
\end{equation}
\item The essential spectrum of $A$ is given by
\begin{equation}
\label{equation: essential spectrum of strictly local operators}
\ess(A) = \ess(A_{\rL}) \cup \ess(A_{\rR}).
\end{equation}
\end{enumerate}
\end{corollary}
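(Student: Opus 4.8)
The plan is to reduce everything to the finite-rank perturbation claim, after which both (i) and (ii) follow from the standard invariance of the Fredholm index and essential spectrum under such perturbations. First I would establish that $A - A_{\rL} \oplus A_{\rR}$ is finite-rank. Using the block representation $A = (A_{ij})$ with $A_{ij} = \sum_{y=-k}^k a_{ij}(y,\cdot) L^y$, it suffices to treat each scalar entry $A_{ij}$ and show that $A_{ij} - \iota_{\rL}^* A_{ij} \iota_{\rL} \oplus \iota_{\rR}^* A_{ij} \iota_{\rR}$ is finite-rank on $\ell^2(\Z) = \ell^2_{\rL}(\Z) \oplus \ell^2_{\rR}(\Z)$; a finite direct sum of finite-rank operators is finite-rank, and conjugating the whole thing back to $\bigoplus_{j=1}^n \ell^2(\Z)$ preserves this. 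For a single term $a(y,\cdot) L^y$, note $L^y \delta_x = \delta_{x-y}$, so this operator shifts the index by $y$; with respect to the splitting at $0$, the only basis vectors $\delta_x$ whose image $a(y,x-y)\delta_{x-y}$ lands on the "wrong side" of the cut are those with $x$ and $x-y$ on opposite sides of $0$, i.e. $x \in \{0,1,\dots,|y|-1\}$ (up to sign of $y$) — a set of cardinality at most $|y| \leq k$. Hence the discrepancy operator is supported on a subspace of dimension at most $2k+1$ in the domain, so it is finite-rank. Summing over the finitely many $y$ and the $n^2$ entries gives the claim.

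Next, for part (i): since $A - A_{\rL}\oplus A_{\rR}$ is compact (being finite-rank), $A$ is Fredholm if and only if $A_{\rL} \oplus A_{\rR}$ is Fredholm, and a direct sum of two operators on Hilbert spaces is Fredholm precisely when each summand is, with $\ker$ and $\ker^*$ decomposing as direct sums; hence $\dim\ker(A_{\rL}\oplus A_{\rR}) = \dim\ker A_{\rL} + \dim\ker A_{\rR}$ and similarly for the adjoint, giving $\ind(A_{\rL}\oplus A_{\rR}) = \ind(A_{\rL}) + \ind(A_{\rR})$. Invariance of the index under compact perturbation then yields $\ind(A) = \ind(A_{\rL}) + \ind(A_{\rR})$, which is \cref{equation: fredholm index of strictly local operators}.

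For part (ii): apply part (i) (or rather its Fredholmness statement) to $A - \lambda$ for each $\lambda \in \C$. Note $(A-\lambda)_{\sharp} = A_{\sharp} - \lambda$ for $\sharp = \rL, \rR$, since the maps $A \mapsto A_{\sharp}$ defined by \cref{equation: matrix repesentation of contraction} are linear and send the identity to the identity. Thus $A - \lambda$ is Fredholm iff both $A_{\rL} - \lambda$ and $A_{\rR} - \lambda$ are Fredholm, i.e. $\lambda \notin \ess(A)$ iff $\lambda \notin \ess(A_{\rL})$ and $\lambda \notin \ess(A_{\rR})$; taking complements gives $\ess(A) = \ess(A_{\rL}) \cup \ess(A_{\rR})$, which is \cref{equation: essential spectrum of strictly local operators}. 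I do not anticipate a serious obstacle here; the only mildly delicate point is the bookkeeping in the finite-rank estimate — being careful that the index set of "boundary-crossing" basis vectors is indeed finite and correctly sized for each $y$, and that one is genuinely comparing $A$ with the orthogonal-sum operator $A_{\rL} \oplus A_{\rR}$ (via the identification of $\ell^2(\Z,\C^n)$ with $\ell^2_{\rL}(\Z,\C^n) \oplus \ell^2_{\rR}(\Z,\C^n)$) rather than with some other compression.
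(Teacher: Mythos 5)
Your proposal is correct and follows essentially the same route as the paper: the paper packages the finite-rank claim via the identity $A - A_{\rL}\oplus A_{\rR} = P[P,A]+[A,P]P$ and the finite-rankness of each $[L^y,P_{\rR}]$, which is exactly your entrywise count of the basis vectors $\delta_x$ crossing the cut at $0$. Parts (i) and (ii) then follow by the same standard compact-perturbation arguments you spell out (the paper leaves them implicit).
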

\begin{proof}
Note that $P := \bigoplus_{j=1}^n P_{\rR}$ is the orthogonal projection onto $\ell_{\rR}^2(\Z, \C^n) = \bigoplus_{j=1}^n \ell_{\rR}^2(\Z).$ We have
\[
A -  A_{\rL} \oplus A_{\rR}
= PA(1 - P) + (1-P)AP
= PA - PAP + AP - PAP
= P[P,A] + [A,P]P,
\]
where $[X,Y] := XY - YX$ denotes the commutator of two operators $X,Y.$ It remains to show that $[A,P]$ is finite-rank, where we may assume without loss of generality that $A$ is of the form \cref{equation2: characterisation of strict locality}. Since $P = \bigoplus_{j=1}^n P_{\rR}$ is a diagonal block-operator matrix, we obtain
\[
[A,P] =
\begin{pmatrix}
\left[\sum^k_{y=-k} a_{11}(y, \cdot) L^{y}, P_{\rR}\right]& \dots & \left[\sum^k_{y=-k} a_{1n}(y, \cdot) L^{y}, P_{\rR} \right] \\
\vdots & \ddots& \vdots \\
\left[\sum^k_{y=-k} a_{n1}(y, \cdot) L^{y}, P_{\rR}\right] & \dots & \left[\sum^k_{y=-k} a_{nn}(y, \cdot) L^{y}, P_{\rR}\right] \\
\end{pmatrix}.
\]
Since $[-, P_{\rR}]$ is linear with respect to the first variable, each $(i,j)$-entry of the above block-operator matrix is given by $\sum^k_{y=-k}  a_{ij}(y, \cdot) \left[ L^{y}, P_{\rR} \right],$ where the commutator $\left[ L^{y}, P_{\rR} \right]$ is finite-rank for $-k \leq y \leq k.$ It follows that $A - A_{\rL} \oplus A_{\rR}$ is finite-rank, and so the remaining assertions immediately follow.
\end{proof}

Note that a strictly local operator of the form \cref{equation2: characterisation of strict locality} has the simplest formula, if each sequence $a_{ij}(y,\cdot)$ is constant. Such an operator admits the following characterisation;

\begin{lemma}
\label{lemma: characterisation of uniformity}
Let $A$ be an operator on $\ell^2(\Z)$ with the block-operator matrix representation \cref{equation: standard representation of A}. Then the following are equivalent:
\begin{enumerate}[(i)]
\item For each $i,j \in \{1, \dots, n\},$ the operator $A_{ij}$ is a finite sum of the form $A_{ij} = \sum^k_{y=-k} a_{ij}(y) L^y$ for some complex numbers $a_{ij}(y),$ where $-k \leq y \leq k.$ 

\item The operator $A$ is strictly local and $[A_{ij}, L^x] = 0$ for each $x \in \Z$ and each $i,j \in \{1, \dots, n\}.$
\end{enumerate}
\end{lemma}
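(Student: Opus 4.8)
The plan is to prove $(i)\Rightarrow(ii)$ by inspection, and $(ii)\Rightarrow(i)$ by converting the commutation relation into translation-invariance of the defining coefficient sequences.

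First, for $(i)\Rightarrow(ii)$: if $A_{ij}=\sum_{y=-k}^{k} a_{ij}(y)L^{y}$ with each $a_{ij}(y)\in\C$, then each such constant is in particular a bounded $\C$-valued sequence, so $A$ is strictly local by \cref{lemma: characterisation of strict locality}. Moreover, by linearity of the commutator in its first argument together with the fact that powers of $L$ mutually commute, $[A_{ij},L^{x}]=\sum_{y=-k}^{k}a_{ij}(y)[L^{y},L^{x}]=0$ for every $x\in\Z$.

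Next, for $(ii)\Rightarrow(i)$: since $A$ is strictly local, \cref{lemma: characterisation of strict locality} supplies bounded sequences $a_{ij}(y,\cdot)=(a_{ij}(y,x))_{x\in\Z}$ with $A_{ij}=\sum_{y=-k}^{k}a_{ij}(y,\cdot)L^{y}$, and using $L^{y}\delta_{x}=\delta_{x-y}$ one gets $A_{ij}\delta_{x}=\sum_{y=-k}^{k}a_{ij}(y,x-y)\,\delta_{x-y}$. I would then use only the case $x=1$ of the hypothesis, i.e. $A_{ij}L=LA_{ij}$ (commuting with $L$ forces commuting with every $L^{x}$, since $L$ is unitary). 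Applying both sides to $\delta_{x}$ gives, on the one hand, $A_{ij}L\delta_{x}=A_{ij}\delta_{x-1}=\sum_{y=-k}^{k}a_{ij}(y,x-1-y)\,\delta_{x-1-y}$, and on the other hand $LA_{ij}\delta_{x}=\sum_{y=-k}^{k}a_{ij}(y,x-y)\,\delta_{x-1-y}$. Since the basis vectors $\{\delta_{x-1-y}\mid -k\le y\le k\}$ are pairwise orthogonal, comparing coefficients yields $a_{ij}(y,x-1-y)=a_{ij}(y,x-y)$ for all $x\in\Z$ and all $-k\le y\le k$; letting $x-y$ range over $\Z$ shows that each $a_{ij}(y,\cdot)$ is invariant under the unit translation, hence constant, and setting $a_{ij}(y):=a_{ij}(y,0)$ proves (i).

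The only delicate point is the disentanglement step in $(ii)\Rightarrow(i)$: the relation $[A_{ij},L]=0$ constrains the operator $A_{ij}$ as a whole, and one must separate it into a condition on each individual coefficient sequence $a_{ij}(y,\cdot)$. Pairing with the orthonormal basis $(\delta_{x})_{x\in\Z}$ does this cleanly, because for fixed $x$ the vectors $\delta_{x-1-y}$ occurring in the two expansions are indexed injectively by $y$ and hence linearly independent. (Alternatively one could invoke the fact that a multiplication operator commutes with $L$ precisely when the multiplying sequence is constant, together with uniqueness of the coefficients $a_{ij}(y,\cdot)$; but the basis computation is self-contained and in fact also re-derives that uniqueness.)
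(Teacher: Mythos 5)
Your proposal is correct and follows essentially the same route as the paper: both directions reduce to pairing $A_{ij}$ with the basis $(\delta_x)_{x\in\Z}$ via \cref{lemma: characterisation of strict locality} and comparing coefficients to extract translation-invariance of the sequences $a_{ij}(y,\cdot)$. The only cosmetic difference is that the paper applies the full hypothesis at once, writing $A_{ij}\delta_x = L^{-x}A_{ij}\delta_0$ to pin every coefficient to $a_{ij}(y,-y)$, whereas you use only $[A_{ij},L]=0$ to get invariance under the unit shift; both are valid and equally elementary.
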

The operator $A$ is said to be \textbi{uniform}, if it satisfies the above equivalent conditions. It follows from (i) that such $A$ admits a block-matrix representation of the following form;
\begin{equation}
\label{equation: characterisation of uniformity}
A
=
\sum^k_{y=-k}
\begin{pmatrix}
a_{11}(y) L^{y} & \dots & a_{1n}(y) L^{y} \\
\vdots & \ddots& \vdots \\
a_{n1}(y) L^{y} & \dots & a_{nn}(y) L^{y} \\
\end{pmatrix}
=
\begin{pmatrix}
\sum^k_{y=-k} a_{11}(y) L^{y} & \dots & \sum^k_{y=-k} a_{1n}(y) L^{y} \\
\vdots & \ddots& \vdots \\
\sum^k_{y=-k} a_{n1}(y) L^{y} & \dots & \sum^k_{y=-k} a_{nn}(y) L^{y} \\
\end{pmatrix}.
\end{equation}

\begin{proof}
It is obvious that (i) implies (ii). To prove the converse, let $A = (A_{ij})$ be strictly local, and let $[A_{ij}, L^x] = 0$ for each $x \in \Z$ and for each $i,j \in \{1, \dots, n\}.$ It follows from \cref{lemma: characterisation of strict locality}(i) that for each $i,j \in \{1, \dots, n\}$ we have $A_{ij} = \sum^k_{y=-k} a_{ij}(y,\cdot) L^y.$ It remains to show that the sequence $a_{ij}(y, \cdot) = (a_{ij}(y, x))_{x \in \Z}$ is constant for a fixed pair $i,j \in \{1, \dots, n\}.$ Since $A_{ij}\delta_x = \sum^k_{y=-k} a_{ij}(y,x-y) \delta_{x-y}$ for each $x \in \Z,$ we get
\[
A_{ij}\delta_x
= A_{ij} L^{-x} \delta_0
= L^{-x} A_{ij}  \delta_0
= L^{-x} \left(\sum^k_{y=-k} a_{ij}(y,-y) \delta_{-y} \right)
= \sum^k_{y=-k} a_{ij}(y,-y) \delta_{x-y}.
\]
It follows that $a_{ij}(y,x-y) = a_{ij}(y,-y)$ for each $x \in \Z$ and for each $y \in \{-k, \dots, k\}.$ The claim follows.
\end{proof}

\subsection{Uniform operators}

The following result is one of the main theorems of the current section;
\begin{theorem}
\label{theorem: uniform operators}
Let $A$ be a uniform operator on $\cH = \ell^2(\Z, \C^n)$ of the form \cref{equation: characterisation of uniformity}, and let
\begin{equation}
\label{equation: definition of f}
\hat{A}(z) :=
\begin{pmatrix}
\sum^k_{y=-k} a_{11}(y) z^{y} & \dots & \sum^k_{y=-k} a_{1n}(y) z^{y} \\
\vdots & \ddots& \vdots \\
\sum^k_{y=-k} a_{n1}(y) z^{y} & \dots & \sum^k_{y=-k} a_{nn}(y) z^{y} \\
\end{pmatrix}, \qquad z \in \T.
\end{equation}
Then the following assertions hold true:
\begin{enumerate}[(i)]
\item The operator $A$ is Fredholm if and only if $A_\rL, A_{\rR}$ are both Fredholm if and only if $\T \ni z \longmapsto \det \hat{A}(z) \in \C$ is nowhere vanishing on $\T.$ In this case, we have $\ind A = \ind A_\rR + \ind A_\rL = 0,$ and
\[
\ind A_{\rR} = \wn \left(\det \hat{A} \right).
\]

\item The essential spectrum of $A$ is given by
\begin{equation}
\label{equation: essential spectra of toeplitz operators}
\ess(A) = \ess(A_{\rL}) = \ess(A_{\rR})
= \bigcup_{z \in \T}
\sigma (\hat{A}(z)).
\end{equation}
\end{enumerate}
\end{theorem}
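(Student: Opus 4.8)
The plan is to move everything to the frequency side via the discrete Fourier transform and to recognise the three operators $A$, $A_\rR$ and $A_\rL$ as, respectively, a matrix-valued multiplication (Laurent) operator and two block Toeplitz operators with continuous $\C^{n\times n}$-valued symbols, whose Fredholm theory is classical. Concretely, let $\cF\colon\ell^2(\Z)\to L^2(\T)$ be the unitary $\Psi\mapsto\sum_{x\in\Z}\Psi(x)\,z^{x}$; then $\cF L\cF^{*}$ is multiplication by $z^{-1}$, $\cF$ maps $\ell^2_\rR(\Z)$ onto the Hardy space $H^2(\T)$ (the functions with vanishing negative Fourier coefficients) and $P_\rR$ onto the Riesz projection $P_{+}$, while it maps $\ell^2_\rL(\Z)$ onto $H^2(\T)^{\perp}$. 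Setting $\check A(z):=\hat A(z^{-1})$, which is again a trigonometric polynomial, hence continuous on $\T$, and applying $\bigoplus_{j=1}^{n}\cF$, the uniform operator $A$ of the form \cref{equation: characterisation of uniformity} is carried to the multiplication operator $M_{\check A}$ on $L^2(\T,\C^n)$, the compression $A_\rR$ of \cref{equation: matrix repesentation of contraction} to the block Toeplitz operator $T_{\check A}=P_{+}M_{\check A}|_{H^2(\T,\C^n)}$, and $A_\rL$ to the block Toeplitz operator $T_{\hat A}$ up to the orientation-reversing change of variable $z\mapsto z^{-1}$ and a finite-rank correction (passing between compressions to $H^2$ and to $zH^2$).

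First I would dispose of the Laurent operator. Multiplication by a \emph{continuous} matrix function on $L^2(\T,\C^n)$, with $\T$ carrying its non-atomic Haar measure, is Fredholm if and only if it is invertible, if and only if $\check A(z)$ is invertible for every $z\in\T$, i.e. $\det\hat A$ is nowhere vanishing on $\T$: if $\det\check A(z_{0})=0$ one builds a singular sequence for $M_{\check A}$ supported on arcs shrinking to $z_{0}$, so $M_{\check A}$ cannot be Fredholm; if $\det\check A$ is nowhere zero then $z\mapsto\check A(z)^{-1}$ is continuous, hence bounded, so $M_{\check A}$ is boundedly invertible and $\ind M_{\check A}=0$. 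By unitary equivalence this already yields that $A$ is Fredholm if and only if $\det\hat A$ is nowhere vanishing on $\T$, with $\ind A=0$ in that case; combined with \cref{corollary: properties of strictly local operators}(i) it supplies the remaining equivalence with the condition that $A_\rL$ and $A_\rR$ are both Fredholm.

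Next I would invoke the classical Gohberg--Krein theory of block Toeplitz operators with continuous symbols: $T_{\Phi}$ is Fredholm if and only if $\det\Phi$ is nowhere vanishing on $\T$, and then $\ind T_{\Phi}=-\wn(\det\Phi)$. Applying this to $\Phi=\check A$, and using that $z\mapsto z^{-1}$ reverses the orientation of $\T$ (so that $z\mapsto\det\hat A(z^{-1})$ vanishes nowhere exactly when $\det\hat A$ does, and $\wn(\det\check A)=-\wn(\det\hat A)$), gives $\ind A_\rR=\ind T_{\check A}=-\wn(\det\check A)=\wn(\det\hat A)$; the index formula of \cref{corollary: properties of strictly local operators}(i) then forces $\ind A_\rL=\ind A-\ind A_\rR=-\wn(\det\hat A)$, which completes part (i). For part (ii), applying part (i) to the uniform operator $A-\lambda$ (whose symbol is $\hat A(\cdot)-\lambda I$) gives at once $\ess(A)=\{\lambda\in\C:\det(\hat A(z)-\lambda I)=0\text{ for some }z\in\T\}=\bigcup_{z\in\T}\sigma(\hat A(z))$, while the identities $\ess(A_\rR)=\ess(A_\rL)=\bigcup_{z\in\T}\sigma(\hat A(z))$ follow from the Gohberg--Krein Fredholm criterion applied to the symbols $\check A-\lambda$ and $\hat A-\lambda$ through the identifications $A_\rR-\lambda\cong T_{\check A-\lambda}$ and $A_\rL-\lambda\cong T_{\hat A-\lambda}$ (the latter modulo finite rank). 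This is \cref{equation: essential spectra of toeplitz operators}.

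The crux is the block Toeplitz input itself, namely the Fredholm criterion together with the index formula $\ind T_{\Phi}=-\wn(\det\Phi)$ for $\C^{n\times n}$-valued symbols. For scalar symbols this is elementary, but for $n\ge 2$ it rests on a Wiener--Hopf (Birkhoff) factorisation of the continuous invertible matrix function $\Phi$, so I would either cite this or, to match the elementary, self-contained spirit of the paper, exploit that $\hat A$ is a \emph{matrix Laurent polynomial}: then $z^{k}\hat A(z)$ is a matrix polynomial and the zeros of $\det\hat A$ lying inside, on, and outside the unit circle can be located and split off explicitly, yielding the factorisation and hence the index by hand. A secondary nuisance is keeping track of the $z\leftrightarrow z^{-1}$ conventions so that the sign works out to the stated $\ind A_\rR=+\wn(\det\hat A)$, together with the harmless finite-rank discrepancy between the compression defining $A_\rL$ and a genuine block Toeplitz operator.
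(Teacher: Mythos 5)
Your proposal is correct and its backbone coincides with the paper's: identify the two compressions with block Toeplitz operators having continuous $\C^{n\times n}$-valued symbols and quote the Gohberg--Krein Fredholm criterion and index formula (the paper does this via the explicit unitaries $\cF_\rL,\cF_\rR$ of \cref{lemma: fourier transform of compression is toeplitz operator}, which give $A_\rL\cong T_{\hat A}$ and $A_\rR\cong T_{\hat A(-^*)}$ \emph{exactly}, so even your ``finite-rank correction'' for $A_\rL$ can be dispensed with). You deviate in two places, both legitimately. First, you prove ``$A$ Fredholm $\iff$ $\det\hat A$ nowhere vanishing, with $\ind A=0$'' directly by viewing $A$ as the Laurent (multiplication) operator $M_{\check A}$ on $L^2(\T,\C^n)$ -- invertible when the symbol is pointwise invertible, non-Fredholm via a shrinking-arc singular sequence otherwise -- whereas the paper obtains the same statement only through \cref{corollary: properties of strictly local operators}(i) and the cancellation $\ind T_{\hat A}+\ind T_{\hat A(-^*)}=-\wn(\det\hat A)+\wn(\det\hat A)=0$; your route gives an independent consistency check on the signs. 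Second, for \cref{equation: essential spectra of toeplitz operators} you apply the Fredholm criterion to the shifted symbols $\hat A-\lambda$ and $\check A-\lambda$, while the paper invokes the $*$-isomorphism $F\mapsto[T_F]$ onto $\cA_n(\mathrm{H}^2)/\cK_n(\mathrm{H}^2)$; these are equivalent, and yours is the more elementary of the two. The one genuine dependency you correctly flag is the matrix Gohberg--Krein index formula itself, which the paper also takes as a citation (to Murphy), so leaving it as a black box -- or sketching the Wiener--Hopf factorisation for matrix Laurent polynomials as you propose -- is consistent with the paper's level of rigour. The orientation bookkeeping ($\wn(\det\check A)=-\wn(\det\hat A)$, hence $\ind A_\rR=+\wn(\det\hat A)$) is handled correctly.
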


A proof of \cref{theorem: uniform operators} shall be given at the end of the current subsection. Let us first recall the notion of Toeplitz operators. Let $L^2(\T)$ be the Hilbert space of square-summable functions on the unit-circle $\T,$ where $\T$ is endowed with the normalised arc-length measure. It is well-known that $L^2(\T)$ admits the standard complete orthonormal basis $(e_x)_{x \in \Z},$ where each $e_x$ is defined by $\T \ni z \longmapsto z^x \in \C.$ The \textbi{Hardy-Hilbert space} $\mathrm{H}^2$ is the closure of the linear span of the set $\{e_x \mid x \geq 0\}.$ Let $\iota : \mathrm{H}^2 \hookrightarrow L^2(\T)$ be the inclusion mapping, and let $f \in \C(\T).$  Then the \textbi{Toeplitz operator} $T_f$ with symbol $f$ is defined by
\begin{equation}
\label{equation: definition of toeplitz operator}
T_f := \iota^* M_f \iota,
\end{equation}
where $M_f : L^2(\T) \to L^2(\T)$ is the multiplication operator by $f.$ More generally, let us consider the Banach space $C(\T, \C^{n \times n})$ of continuous matrix-valued functions on $\T.$ Given a function $F \in C(\T, \C^{n \times n})$ of the form $F(z) = (f_{ij}(z))$ for each $z \in \T,$ the Toeplitz operator with symbol $F$ is defined by
\begin{equation}
\label{equation: definition of toeplitz operator with matrix-valued symbol}
T_F :=
\begin{pmatrix}
T_{f_{11}} & \dots & T_{f_{1n}} \\
\vdots & \ddots & \vdots\\
T_{f_{n1}} & \dots & T_{f_{nn}}
\end{pmatrix}_{\bigoplus_{j=1}^n \mathrm{H}^2}.
\end{equation}
The following result is standard;


\begin{theorem}
\label{theorem: Gohberg-Krein}
Let $F \in C(\T, \C^{n \times n})$ be a matrix-valued function of the form $F(\cdot) = (f_{ij}(\cdot)),$ and let $T_F$ be the corresponding Toeplitz operator given by \cref{equation: definition of toeplitz operator with matrix-valued symbol}. Then the following assertions hold true:
\begin{enumerate}[(i)]
\item The Toeplitz operator $T_F$ is Fredholm if and only if $\T \ni z \longmapsto \det F(z) \in \C$ is nowhere vanishing on $\T.$ In this case,
\begin{equation}
\label{equation: fredholm index of matrix toeplitz operator}
\ind T_F = - \wn(\det F).
\end{equation}

\item The essential spectrum of $T_F$ is given by
\begin{equation}
\label{equation: essential spectrum of matrix toeplitz operator}
\ess(T_F) = \bigcup_{z \in \T} \sigma(F(z)).
\end{equation}
\end{enumerate}
\end{theorem}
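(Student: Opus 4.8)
The plan is to give an essentially self-contained argument resting on only two external inputs: the compactness of Hankel operators with continuous symbol (equivalently, that $T_fT_g-T_{fg}$ is compact for all $f,g\in C(\T)$), and the topological fact that $\mathrm{SL}_n(\C)$ is connected and simply connected, so that $\det$ induces isomorphisms on $\pi_0$ and $\pi_1$ of $\mathrm{GL}_n(\C)$. Everything else — Fredholmness, the essential spectrum, and the index — is then derived by hand from these.

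First I would promote the first input to matrix symbols: unwinding \cref{equation: definition of toeplitz operator with matrix-valued symbol} entrywise shows that $T_FT_G-T_{FG}$ is compact for all $F,G\in C(\T,\C^{n\times n})$, so $F\longmapsto\pi(T_F)$ is a unital algebra homomorphism into the Calkin algebra. If $\det F$ is nowhere vanishing on $\T$, then $F$ is invertible in $C(\T,\C^{n\times n})=M_n(C(\T))$ with continuous inverse $F^{-1}$, whence $\pi(T_{F^{-1}})$ inverts $\pi(T_F)$ and $T_F$ is Fredholm. For the converse I would build a singular sequence: if $\det F(z_0)=0$, fix a unit vector $v\in\C^n$ with $F(z_0)v=0$ and a localising sequence $(u_k)$ in $\mathrm{H}^2$ at $z_0$ — a weakly null sequence of unit vectors with $\|M_fu_k-f(z_0)u_k\|_{L^2(\T)}\to 0$ for every $f\in C(\T)$, for instance the normalised frequency blocks $(k+1)^{-1/2}\sum_{\ell=N_k}^{N_k+k}\overline{z_0}^{\,\ell}e_\ell$ with $N_k\to\infty$. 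Then $\psi_k:=u_k\otimes v$ is weakly null, $\|\psi_k\|=1$, and $\|T_F\psi_k\|\to\|F(z_0)v\|=0$; since a Fredholm operator is bounded below on the orthogonal complement of its finite-dimensional kernel, this forces $T_F$ to be non-Fredholm. This settles the Fredholmness criterion of part (i).

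Part (ii) is then immediate: $\lambda\notin\ess(T_F)$ if and only if $T_{F-\lambda I}$ is Fredholm if and only if $\det(F(z)-\lambda I)\neq 0$ for every $z\in\T$ if and only if $\lambda\notin\bigcup_{z\in\T}\sigma(F(z))$, which is \cref{equation: essential spectrum of matrix toeplitz operator}; the right-hand side is automatically closed, being the image of the compact set $\{(z,\lambda):\det(F(z)-\lambda I)=0\}$ under $(z,\lambda)\mapsto\lambda$. For the index in part (i) I would use homotopy invariance: $F\longmapsto T_F$ is norm-continuous from $\mathrm{GL}_n(C(\T))$ into the Fredholm operators and $\ind$ is locally constant there, so $\ind T_F$ depends only on the path component of $F$ in $C(\T,\mathrm{GL}_n(\C))$. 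By the second input this component is classified by $m:=\wn(\det F)$, so $F$ is homotopic in $\mathrm{GL}_n(C(\T))$ to $z\longmapsto\mathrm{diag}(z^m,1,\dots,1)$; for that symbol $T_F=T_{z^m}\oplus I\oplus\cdots\oplus I$, and a direct computation with powers of the unilateral shift $T_z$ (or its adjoint, if $m<0$) gives $\ind T_{z^m}=-m$. Hence $\ind T_F=-m=-\wn(\det F)$, which is \cref{equation: fredholm index of matrix toeplitz operator}.

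The main obstacle is the converse half of the Fredholmness statement — equivalently, the injectivity of the symbol map — which hinges on having the explicit localising sequence in $\mathrm{H}^2$ displayed above; the only other nontrivial ingredient, the classification of path components of $C(\T,\mathrm{GL}_n(\C))$ by the winding number of the determinant, is pure topology. Both are entirely classical, so in the final writeup one may simply cite the standard references (for instance the monographs of B\"ottcher--Silbermann or of Gohberg--Krupnik) rather than reproduce the argument.
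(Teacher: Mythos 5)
Your proposal is correct, and it takes a genuinely more self-contained route than the paper, which proves essentially nothing here: part (i) is quoted verbatim as the Gohberg--Krein theorem from the literature, and part (ii) is deduced from the cited fact that $F\longmapsto [T_F]$ is a $*$-isomorphism of $C(\T,\C^{n\times n})$ onto $\cA_n(\mathrm{H}^2)/\cK_n(\mathrm{H}^2)$, so that invertibility of $F$ is equivalent to invertibility of $[T_F]$ in the Calkin algebra. You instead establish the Fredholmness criterion by hand: Atkinson's theorem together with the compactness of $T_FT_G-T_{FG}$ gives the easy direction, and your Fej\'er-type localising sequence $u_k\otimes v$ at a zero of $\det F$ gives the hard direction --- this is precisely the injectivity of the symbol map that the paper imports from Douglas. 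Once that criterion is in hand, your derivation of (ii) by applying it to $F-\lambda I$ is the same one-line argument the paper uses after invoking the isomorphism. For the index you supply the standard homotopy proof (classify components of $C(\T,\mathrm{GL}_n(\C))$ by $\wn(\det F)$, reduce to $\mathrm{diag}(z^m,1,\dots,1)$, compute $\ind T_{z^m}=-m$ directly), again where the paper only cites. The trade-off is length versus transparency: the paper's proof is two references, while yours is a complete proof resting only on Hartman's theorem and $\pi_1(\mathrm{GL}_n(\C))\cong\Z$, which arguably fits the paper's advertised ``constructive approach'' better. In a final writeup you should spell out the last step of the singular-sequence argument: since $\ker T_F$ is finite-dimensional, the orthogonal projection onto it maps the weakly null sequence $(\psi_k)$ to a norm-null sequence, so Fredholmness would force $\liminf_k\|T_F\psi_k\|>0$, contradicting $\|T_F\psi_k\|\to\|F(z_0)v\|=0$.
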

\begin{proof}
Note that (i) is the celebrated theorem of Gohberg-Krein (see, for example, \cite[Theorem 3.3]{Murphy-2006}). It remains to prove (ii). Let $\cB_n(\mathrm{H}^2) := \cB \left(\bigoplus_{j=1}^n \mathrm{H}^2\right)$ be the $C^*$-algebra of operators on $\bigoplus_{j=1}^n \mathrm{H}^2,$ and let $\cK_n(\mathrm{H}^2) := \cK \left(\bigoplus_{j=1}^n \mathrm{H}^2\right)$ be the ideal of compact operators on $\bigoplus_{j=1}^n \mathrm{H}^2.$ Let $\cA_n(\mathrm{H}^2)$ be the closed $*$-subalgebra of $\cB_n(\mathrm{H}^2)$ generated by $\{T_F \mid F \in  C(\T, \C^{n \times n})\}.$ It is a well-known result that the following mapping is $*$-isomorphic (see, for example, \cite[\textsection 1]{Douglas-1973});
\[
C(\T, \C^{n \times n}) \ni F \longmapsto [T_F] \in \cA_n(\mathrm{H}^2) / \cK_n(\mathrm{H}^2).
\]
That is, for each $F \in C(\T, \C^{n \times n})$ we have that $F$ is invertible in $C(\T, \C^{n \times n})$ if and only if $[T_F]$ is invertible in the Calkyin algebra $\cB_n(\mathrm{H}^2) / \cK_n(\mathrm{H}^2).$ The equality \cref{equation: essential spectrum of matrix toeplitz operator} follows.
\end{proof}

Let us consider two unitary operators $\cF_{\rL} :  \mathrm{H}^2 \to \ell^2_{\rL}(\Z)$ and $\cF_{\rR} : \mathrm{H}^2 \to \ell^2_{\rR}(\Z)$ defined respectively by
\[
\cF_{\rL} e_x := \delta_{-x-1}, \qquad \cF_{\rR}e_x  := \delta_x, \qquad x \geq 0,
\]
where $(\delta_x)_{x \in \Z}, (e_x)_{x \in \Z}$ are the standard bases of $\ell^2(\Z), \mathrm{H}^2$ respectively.

\begin{lemma}
\label{lemma: fourier transform of compression is toeplitz operator}
Let $A$ be a uniform operator on $\cH = \ell^2(\Z, \C^n)$ of the form \cref{equation: characterisation of uniformity}, and let $\hat{A}$ be given by \cref{equation: definition of f}. Then
\begin{equation}
\label{equation: essential spectra of uniform operators}
\left(\bigoplus_{j=1}^n \cF^*_{\sharp} \right) A_{\sharp} \left(\bigoplus_{j=1}^n \cF_{\sharp} \right)
=
\begin{cases}
T_{\hat{A}}, & \sharp = \rL, \\
T_{\hat{A}(-^{*})}, & \sharp = \rR,
\end{cases}
\end{equation}
where $\hat{A}(-^{*})$ denote the matrix-valued function $\T \ni z \longmapsto \hat{A}(z^{*}) \in \C^{n \times n}.$
\end{lemma}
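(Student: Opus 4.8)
The plan is to reduce the claimed identity to a scalar computation and then to verify it by evaluating both sides on the standard orthonormal basis $(e_x)_{x \geq 0}$ of $\mathrm{H}^2.$ Since $\cF_\sharp$ is a single unitary on $\mathrm{H}^2,$ both sides of \cref{equation: essential spectra of uniform operators} are block-operator matrices: by \cref{equation: matrix repesentation of contraction} the left-hand side has $(i,j)$-entry $\cF_\sharp^* \iota_\sharp^* A_{ij} \iota_\sharp \cF_\sharp,$ where $A_{ij} = \sum_{y=-k}^k a_{ij}(y) L^y,$ while by \cref{equation: definition of toeplitz operator with matrix-valued symbol} the right-hand side has $(i,j)$-entry $T_{g_{ij}}$ with $g_{ij}(z) = \sum_{y=-k}^k a_{ij}(y) z^y$ (for $\sharp = \rL$) or $g_{ij}(z) = \sum_{y=-k}^k a_{ij}(y) z^{-y}$ (for $\sharp = \rR,$ using $z^* = z^{-1}$ on $\T$). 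Thus it suffices to prove, for a scalar uniform operator $B = \sum_{y=-k}^k a(y) L^y$ on $\ell^2(\Z)$ with symbol $\hat B(z) = \sum_{y=-k}^k a(y) z^y,$ that $\cF_\rL^* B_\rL \cF_\rL = T_{\hat B}$ and $\cF_\rR^* B_\rR \cF_\rR = T_{\hat B(-^*)},$ where $B_\sharp := \iota_\sharp^* B \iota_\sharp.$ As $B$ is a finite sum, all operators in sight are bounded, so it is enough to compare the two sides on each $e_x,$ $x \geq 0.$

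First I would handle $\sharp = \rL.$ Using $\iota_\rL \cF_\rL e_x = \delta_{-x-1}$ and $L^y \delta_m = \delta_{m-y},$ one computes $B \delta_{-x-1} = \sum_{y=-k}^k a(y) \delta_{-x-1-y};$ the projection $\iota_\rL^*$ retains exactly the terms with $-x-1-y < 0,$ i.e. $x + y \geq 0;$ and $\cF_\rL^* \delta_{-x-1-y} = e_{x+y}$ precisely for those same indices. Hence $\cF_\rL^* B_\rL \cF_\rL\, e_x = \sum a(y)\, e_{x+y},$ the sum running over $-k \leq y \leq k$ with $x + y \geq 0.$ On the other hand, $M_{\hat B} e_x = \hat B \cdot e_x = \sum_{y=-k}^k a(y)\, e_{x+y},$ and compressing back to $\mathrm{H}^2$ via $\iota^*$ keeps exactly the terms with $x + y \geq 0,$ so $T_{\hat B} e_x$ equals the same expression. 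The case $\sharp = \rR$ is handled the same way: $\iota_\rR \cF_\rR e_x = \delta_x,$ $B \delta_x = \sum_{y=-k}^k a(y) \delta_{x-y},$ the projection $\iota_\rR^*$ keeps the terms with $x - y \geq 0,$ and $\cF_\rR^* \delta_{x-y} = e_{x-y}$ for those indices; meanwhile, since the symbol $\hat B(-^*)$ is $z \mapsto \sum_y a(y) z^{-y},$ one has $M_{\hat B(-^*)} e_x = \sum_{y=-k}^k a(y)\, e_{x-y},$ whose compression to $\mathrm{H}^2$ again matches. By linearity and boundedness the two operators agree on all of $\mathrm{H}^2,$ and then the block identity \cref{equation: essential spectra of uniform operators} follows entrywise.

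The computation is entirely elementary, and the only point demanding genuine care is the bookkeeping of the truncation conditions: one must check that the index range surviving the projection $\iota_\sharp^*$ (namely $x + y \geq 0$ for $\sharp = \rL,$ respectively $x - y \geq 0$ for $\sharp = \rR$) coincides exactly with the range on which the inverse Fourier map $\cF_\sharp^*$ returns a basis vector of $\mathrm{H}^2,$ so that no term is mishandled at the boundary. One must also keep track of the reflection $x \mapsto -x-1$ built into $\cF_\rL;$ it is precisely this reflection — together with the absence of a reflection in $\cF_\rR$ — that accounts for the asymmetry whereby the left compression produces the Toeplitz operator with symbol $\hat B,$ whereas the right compression produces the one with the reflected symbol $\hat B(-^*).$
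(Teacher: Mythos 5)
Your proposal is correct and follows essentially the same route as the paper: reduce to the $(i,j)$-entries, then verify the scalar identity on the basis vectors $e_x$ of $\mathrm{H}^2$. The only (immaterial) difference is that the paper first establishes $T_{e_y} = \cF_{\rL}^{*}\iota_{\rL}^* L^{y} \iota_{\rL} \cF_{\rL} = \cF_{\rR}^{*}\iota_{\rR}^* L^{-y} \iota_{\rR} \cF_{\rR}$ for $y \geq 0$ (where no truncation occurs) and obtains negative powers by taking adjoints, whereas you track the truncation conditions directly for all $y$ at once — your bookkeeping ($x+y\geq 0$ for $\rL$, $x-y\geq 0$ for $\rR$) is accurate.
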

\begin{proof}
Note first that the inverses of $\cF_{\rL}, \cF_{\rR}$ are given respectively by $\cF_{\rL}^{-1} \delta_x = \cF_{\rL}^{*} \delta_x = e_{-x-1}$ for each $x < 0$ and $\cF_{\rR}^{-1} \delta_x = \cF_{\rR}^{*} \delta_x = e_x$ for each $x \geq 0.$ Let us first prove the following non-trivial equalities:
\begin{equation}
\label{equaton: fourier transform of compression}
T_{e_y} = \cF_{\rL}^{*}\iota_{\rL}^* L^{y} \iota_{\rL} \cF_{\rL} =  \cF_{\rR}^{*}\iota_{\rR}^* L^{-y} \iota_{\rR} \cF_{\rR} , \qquad y \in \Z.
\end{equation}
Note that for each $y \geq 0$ each $x \geq 0$ we have
\begin{align*}
&T_{e_y} e_x
= \iota^* M_{e_y} \iota e_x
= \iota^* M_{e_y} e_x
= \iota^* e_{x+y}
= e_{x+y}, \\
&\cF_{\rL}^{*} \iota_{\rL}^* L^{y} \iota_{\rL} \cF_{\rL} e_x
=\cF_{\rL}^{*} \iota_{\rL}^* L^{y} \iota_{\rL} \delta_{-x-1}
=\cF_{\rL}^{*}\iota_{\rL}^* \delta_{-x-y-1}
= e_{x+y}, \\
&\cF_{\rR}^{*} \iota_{\rR}^* L^{-y} \iota_{\rR} \cF_{\rR} e_x
= \cF_{\rR}^{*} \iota_{\rR}^* L^{-y} \iota_{\rR} \delta_x
= \cF_{\rR}^{*} \iota_{\rR}^* \delta_{x+y}
= e_{x+y}.
\end{align*}
That is, we have shown that \cref{equaton: fourier transform of compression} holds true for any $y \geq 0.$ On the other hand, if $y < 0,$ then $-y > 0,$ and so
\[
T_{e_y} = (T_{e_{-y}})^*
= (\cF_{\rL}^{*} \iota_{\rL}^* L^{-y} \iota_{\rL} \cF_{\rL})^*
=
(\cF_{\rR}^{*} \iota_{\rR}^* L^{y} \iota_{\rR} \cF_{\rR})^* , \qquad y < 0.
\]
That is, \cref{equaton: fourier transform of compression} holds true for any $y \in \Z.$ Let
\[
f_{ij}(z) 
:=  \sum^k_{y=-k} a_{ij}(y) z^{y}
=  \sum^k_{y=-k} a_{ij}(y) e_{y}(z), \qquad z \in \T.
\]
Then the block-operator matrix representation of $A$ is given by
\[
A =
\begin{pmatrix}
\sum^k_{y=-k} a_{11}(y) L^{y} & \dots & \sum^k_{y=-k} a_{1n}(y) L^{y} \\
\vdots & \ddots& \vdots \\
\sum^k_{y=-k} a_{n1}(y) L^{y} & \dots & \sum^k_{y=-k} a_{nn}(y) L^{y} \\
\end{pmatrix}
=
\begin{pmatrix}
f_{11}(L) & \dots & f_{1n}(L) \\
\vdots & \ddots &  \vdots\\
f_{n1}(L) & \dots & f_{nn}(L)
\end{pmatrix}.
\]
With the representation \cref{equation: matrix repesentation of contraction} in mind, we obtain
\[
\left(\bigoplus_{j=1}^n \cF_{\sharp}^* \right) A_{\sharp} \left(\bigoplus_{j=1}^n \cF_{\sharp} \right)
 =
\begin{pmatrix}
\cF_{\sharp}^* \iota^*_{\sharp}  f_{11}(L) \iota_{\sharp} \cF_{\sharp} & \dots &  \cF_{\sharp}^* \iota^*_{\sharp}  f_{1n}(L)\iota_{\sharp}\cF_{\sharp}  \\
\vdots & \ddots & \vdots \\
\cF_{\sharp}^* \iota^*_{\sharp} f_{n1}(L) \iota_{\sharp} \cF_{\sharp}& \dots &  \cF_{\sharp}^* \iota^*_{\sharp} f_{nn}(L)\iota_{\sharp} \cF_{\sharp}
\end{pmatrix}_{\bigoplus_{j=1}^n \mathrm{H}^2}, \qquad \sharp =\rL, \rR,
\]
where \cref{equaton: fourier transform of compression} gives the following equalities for each $i,j \in \{1, \dots, n\}:$
\[
\cF_{\sharp}^* \iota^*_{\sharp} f_{ij}(L) \iota_{\sharp} \cF_{\sharp}
= \sum^k_{y=-k} a_{ij}(y) (\cF_{\sharp}^* \iota_{\sharp}^*  L^{y} \iota_{\sharp} \cF_{\sharp})
=
\begin{cases}
\sum^k_{y=-k} a_{ij}(y) T_{e_y} = T_{f_{ij}}, & \sharp = \rL, \\
\sum^k_{y=-k} a_{ij}(y) T_{e_y}^* = T_{f_{ij}(-^{*})}, &\sharp = \rR.
\end{cases}
\]
It follows that \cref{equation: essential spectra of uniform operators} holds true, since the Toeplitz operator $T_{\hat{A}}$ is given by \cref{equation: definition of toeplitz operator with matrix-valued symbol} with $F := \hat{A}.$
\end{proof}

\begin{proof}[Proof of \cref{theorem: uniform operators}]
Let $A$ be a uniform operator on $\cH = \ell^2(\Z, \C^n)$ of the form \cref{equation: characterisation of uniformity}, and let $\hat{A}$ be given by \cref{equation: definition of f}. It follows from \cref{equation: essential spectra of uniform operators} that $A_\rL \cong T_{\hat{A}}$ and $A_\rR \cong T_{\hat{A}(-^*)},$ where $\cong$ denotes unitary equivalence. The Fredholmness and essential spectra are invariant under unitary transforms. 

(i) It follows from \cref{corollary: properties of strictly local operators} (i) that the operator $A$ is Fredholm if and only if $A_\rL \cong T_{\hat{A}}, A_{\rR} \cong T_{\hat{A}(-^*)}$ are both Fredholm, and in this case we have $\ind A = \ind T_{\hat{A}} + \ind T_{\hat{A}(-^*)}.$ On the other hand, it follows from \cref{theorem: Gohberg-Krein}(i) that $A_\rL \cong T_{\hat{A}}$ is Fredholm if and only if $A_{\rR} \cong T_{\hat{A}(-^*)}$ is Fredholm if and only if $\det \hat{A}$ is nowhere vanishing. In this case, we have $\ind A_\rL = -\wn \left(\det \hat{A}\right)$ and $\ind A_\rR = -  \wn \left(\det \hat{A}(-^*)\right).$ Therefore,
\[
\ind A
= \ind A_\rL + \ind A_\rR
= -\wn \left(\det \hat{A}\right) -  \wn \left(\det \hat{A}(-^*)\right)
= -\wn \left(\det \hat{A}\right) + \wn \left(\det \hat{A} \right) 
= 0.
\]

(ii) It follows from \cref{corollary: properties of strictly local operators} (ii) that $\ess(A) = \ess(A_{\rL}) \cup \ess(A_{\rR}),$ where $A_\rL \cong T_{\hat{A}}, A_{\rR} \cong T_{\hat{A}(-^*)}.$ It follows from \cref{theorem: Gohberg-Krein}(ii) that
\[
\ess(A_\rL) = \ess(T_{\hat{A}}) = \bigcup_{z \in \T} \sigma \left(\hat{A}(z)\right) = \bigcup_{z \in \T} \sigma \left(\hat{A}(z^*) \right) = \ess\left(T_{\hat{A}(-^*)}\right) = \ess(A_\rR),
\]
where the third equality follows from the fact that the ranges of $\hat{A}, \hat{A}(-^*)$ are identical.
\end{proof}

\subsection{Proof of Theorem A}

We are now in a position to prove \cref{theorem: two-phase case} with the aid of the following lemma;

\begin{lemma}
\label{lemma: theorem A}
Under the assumption of \cref{theorem: two-phase case}, the difference $A_{\rL} \oplus A_{\rR} - A(-\infty)_{\rL} \oplus A(+\infty)_{\rR}$ is compact.
\end{lemma}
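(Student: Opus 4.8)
The plan is to reduce everything to the single observation that compressing a \emph{decaying} diagonal multiplication operator yields a compact operator, the shifts $L^y$ contributing only finite-rank corrections. Since the compression maps $B\mapsto B_\rL$ and $B\mapsto B_\rR$ of \cref{equation: matrix repesentation of contraction} act entrywise and are linear, and since $A-A(-\infty)$ and $A-A(+\infty)$ are again strictly local, with $(i,j)$-entries $\sum_{y=-k}^k\bigl(a_{ij}(y,\cdot)-a_{ij}(y,\mp\infty)\bigr)L^y$, one has
\[
A_\rL\oplus A_\rR-A(-\infty)_\rL\oplus A(+\infty)_\rR=\bigl(A-A(-\infty)\bigr)_\rL\oplus\bigl(A-A(+\infty)\bigr)_\rR,
\]
so it suffices to show that $\bigl(A-A(-\infty)\bigr)_\rL$ and $\bigl(A-A(+\infty)\bigr)_\rR$ are compact; the conclusion then follows because a direct sum of two compact operators is compact. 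By the evident symmetry $x\leftrightarrow-x$, $\rL\leftrightarrow\rR$, $\mp\infty\leftrightarrow\pm\infty$, I will treat only the first.

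Decomposing into the $n^2$ blocks and the $2k+1$ shift-terms, the whole claim reduces to the elementary assertion: \emph{if $c=(c(x))_{x\in\Z}$ is a bounded $\C$-valued sequence with $c(x)\to0$ as $x\to-\infty$, then $\iota_\rL^*M_cL^y\iota_\rL$ is compact on $\ell^2_\rL(\Z)$ for every $y\in\Z$.} To prove this I would insert the resolution of the identity $1=\iota_\rL\iota_\rL^*+\iota_\rR\iota_\rR^*$ between $M_c$ and $L^y$, giving
\[
\iota_\rL^*M_cL^y\iota_\rL=\bigl(\iota_\rL^*M_c\iota_\rL\bigr)\bigl(\iota_\rL^*L^y\iota_\rL\bigr)+\bigl(\iota_\rL^*M_c\iota_\rR\bigr)\bigl(\iota_\rR^*L^y\iota_\rL\bigr).
\]
In the first summand, $\iota_\rL^*M_c\iota_\rL$ is the diagonal operator on $\ell^2_\rL(\Z)$ with entries $c(x)$, $x<0$, which tend to $0$ along the index set and hence make it compact (a norm limit of its finite truncations); composing with the bounded operator $\iota_\rL^*L^y\iota_\rL$ leaves it compact. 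In the second summand $\iota_\rR^*L^y\iota_\rL$ is finite-rank: by $L^y\delta_x=\delta_{x-y}$, its only possibly nonzero matrix entries $\langle\delta_x,L^y\delta_{x'}\rangle$ with $x\ge0$, $x'<0$ occur at $x'=x+y$ with $0\le x<-y$, a finite (possibly empty) index set, so composing on either side with bounded operators keeps it finite-rank, hence compact. (In fact $\iota_\rL^*M_c\iota_\rR=0$ since $M_c$ is diagonal, so this term vanishes outright, but finite-rankness already suffices.) This proves the assertion.

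Finally, applying the assertion with $c=a_{ij}(y,\cdot)-a_{ij}(y,-\infty)$ — bounded and vanishing at $-\infty$ precisely by assumption \cref{equation: two-phase assumptions} — to every $i,j$ and every $y\in\{-k,\dots,k\}$ shows that each block of $\bigl(A-A(-\infty)\bigr)_\rL$ is compact, hence so is $\bigl(A-A(-\infty)\bigr)_\rL$, and symmetrically $\bigl(A-A(+\infty)\bigr)_\rR$ is compact; taking the direct sum completes the proof. I do not foresee a genuine obstacle here: the only care needed is the bookkeeping with the shifts $L^y$ — in particular keeping straight which ``end'' of $\Z$ each of the two compressions sees, so that the correct asymptotic value $a_{ij}(y,-\infty)$ resp.\ $a_{ij}(y,+\infty)$ is subtracted — which is the same mechanism already exploited in the proof of \cref{corollary: properties of strictly local operators}.
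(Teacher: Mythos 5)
Your proposal is correct and follows essentially the same route as the paper: both reduce the claim, via linearity of the compressions and the entrywise block structure, to the compactness of a diagonal multiplication operator whose symbol decays at the relevant end of $\Z$, with the shifts $L^y$ absorbed as bounded (or finite-rank) factors. The only cosmetic difference is that you work on the half-line subspaces and insert the resolution of the identity $\iota_\rL\iota_\rL^*+\iota_\rR\iota_\rR^*$, whereas the paper keeps everything on $\ell^2(\Z)$ and observes directly that $P_\rR\bigl(a_{ij}(y,\cdot)-a_{ij}(y,+\infty)\bigr)$ is a multiplication operator vanishing at both infinities.
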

It follows that $A_{\rL} - A(-\infty)_{\rL}, A_{\rR} - A(+\infty)_{\rR}$ are both compact.
\begin{proof}
Let $P_{+\infty} := \bigoplus_{j=1}^n P_{\rR},$ and let $P_{-\infty} := \bigoplus_{j=1}^n P_{\rL}.$ We have
\begin{align*}
A_{\rL} \oplus A_{\rR} - A(-\infty)_{\rL} \oplus A(+\infty)_{\rR}
&= A_{\rL} \oplus 0 + 0 \oplus A_{\rR} - (A(-\infty)_{\rL} \oplus 0 + 0 \oplus A(+\infty)_{\rR}) \\
&= P_{-\infty}AP_{-\infty} +  P_{+\infty}AP_{+\infty} - (P_{-\infty} A(-\infty) P_{-\infty} +  P_{+\infty}A(+\infty)P_{+\infty}) \\
&=P_{-\infty}(A - A(-\infty))P_{-\infty} +  P_{+\infty}(A - A(+\infty))P_{+\infty}.
\end{align*}
It remains to show that $B(\star) := P_{\star}(A -  A(\star))P_{\star}$ is compact for each $\star = \pm \infty.$ We have
\[
B(\star) =
P_{\star}
\begin{pmatrix}
\sum^k_{y=-k} (a_{11}(y, \cdot) - a_{11}(y, \star)) L^{y} & \dots & \sum^k_{y=-k} (a_{1n}(y, \cdot) - a_{1n}(y, \star)) L^{y} \\
\vdots & \ddots& \vdots \\
\sum^k_{y=-k} (a_{n1}(y, \cdot) - a_{n1}(y, \star)) L^{y} & \dots & \sum^k_{y=-k} (a_{nn}(y, \cdot) - a_{nn}(y, \star)) L^{y} \\
\end{pmatrix}
P_{\star}.
\]
If $B(\star) = (B_{ij}(\star))$ is the block-operator matrix representation of $B,$ then
\begin{equation}
\label{equation: definition of Bij}
B_{ij}(\star) =
\begin{cases}
\sum^k_{y=-k} P_{\rR}(a_{ij}(y, \cdot) - a_{ij}(y, +\infty)) L^{y} P_{\rR}, & \star = +\infty, \\
\sum^k_{y=-k} P_{\rL}(a_{ij}(y, \cdot) - a_{ij}(y, -\infty)) L^{y} P_{\rL}, & \star = -\infty.
\end{cases}
\end{equation}
Note that the projection $P_{\rR}$ on $\ell^2(\Z) = \bigoplus_{x \in \Z} \C$ is a multiplication operator of the form $P_{\rR} = \bigoplus_{x \in \Z} \delta_{\rR}(x),$ where $\delta_{\rR}(x) := 1$ for each $x \geq 0$ and $\delta_{\rR}(x) := 0$ for each $x < 0.$ That is, for each $y \in \{-k, \dots, k\},$
\begin{align*}
&P_{\rR}(a_{ij}(y, \cdot) - a_{ij}(y, +\infty)) = \bigoplus_{x \in \Z} \delta_{\rR}(x)(a_{ij}(y, x) - a_{ij}(y, +\infty)), \\
&\lim_{x \to \pm \infty}  \delta_{\rR}(x)(a_{ij}(y, x) - a_{ij}(y, +\infty)) = 0.
\end{align*}
It follows that $P_{\rR}(a_{ij}(y, \cdot) - a_{ij}(y, +\infty))$ is compact for each $i,j,y,$ and so $B_{ij}(+\infty)$ given by \cref{equation: definition of Bij} is compact. Hence, $B(+\infty) = (B_{ij}(+\infty))$ is compact. An analogous argument can be used to show that $B(-\infty) = (B_{ij}(-\infty))$ is compact. The claim follows.
\end{proof}

\begin{proof}[Proof of \cref{theorem: two-phase case}]
Under the assumption of \cref{theorem: two-phase case}, it follows from \cref{corollary: properties of strictly local operators} and \cref{lemma: theorem A} that $A - A(-\infty)_{\rL} \oplus A(+\infty)_{\rR}$ is compact, where $A(-\infty), A(+\infty)$ are uniform operators.

(i) We shall make use of \cref{theorem: uniform operators} (i). We have that $A$ is Fredholm if and only if $A(-\infty)_{\rL}, A(+\infty)_{\rR}$ are Fredholm if and only if $\hat{A}(\cdot, \star)$ is nowhere vanishing on $\T$ for each $\star = \pm \infty.$ In this case, we have
\begin{align*}
\ind A
= \ind A(-\infty)_{\rL} + \ind A(+\infty)_{\rR} 
= -\wn\left(\hat{A}(\cdot,-\infty)\right) + \wn\left(\hat{A}(\cdot,+\infty) \right).
\end{align*}

(ii) We shall make use of \cref{theorem: uniform operators} (ii). We have
\[
\ess(A)
= \ess(A(-\infty)_{\rL}) \cup \ess(A(+\infty)_{\rR})
= \ess(A(-\infty)) \cup \ess(A(+\infty)),
\]
where each $\ess(A(\pm \infty))$ is given by the formula \cref{equation2: essential spectrum of two-phase operator}.
\end{proof}

\section{Applications of Theorem A}
\label{section: examples}

Quantum walk theory is a quantum-mechanical counterpart of the classical random walk theory, and it constitutes a vast multidisciplinary branch of modern Science. Certain primitive forms of this ubiquitous notion can be found, for example, in \cite{Gudder-1988,Aharonov-Davidovich-Zagury-1993,Meyer-1996,Ambainis-Bach-Nayak-Vishwanath-Watrous-2001}. Mathematically rigorous studies of discrete-time quantum walks include: scattering-theoretic analysis \cite{Suzuki-2016,Richard-Suzuki-Tiedra-2017,Richard-Suzuki-Tiedra-2018,Morioka-2019,Wada-2020}, non-linear analysis \cite{Maeda-Sasaki-Segawa-Suzuki-Suzuki-2018a,Maeda-Sasaki-Segawa-Suzuki-Suzuki-2018b,Maeda-Sasaki-Segawa-Suzuki-Suzuki-2019}, 
localisation and weak-limit theorems \cite{Konno-2002,Inui-Konishi-Konno-2004,Segawa-2011,Cantero-Grunbaum-Moral-Velazquez-2012,Suzuki-2016,Fuda-Funakawa-Suzuki-2018,Fuda-Funakawa-Suzuki-2019}, classification theorems \cite{Ohno-2016,Ohno-2017,Cedzich-Geib-Grunbaum-Stahl-Velazquez-Werner-Werner-2018}, discrete analogues of the time-operator \cite{Sambou-Tiedra-2019,Funakawa-Matsuzawa-Sasaki-Suzuki-Teranishi-2020}, and index theorems \cite{Cedzich-Geib-Stahl-Velazquez-Werner-Werner-2018,Suzuki-Tanaka-2019,Matsuzawa-2020}.

In this section we shall give a new index theorem in align with the setting of \cite{Cedzich-Geib-Grunbaum-Stahl-Velazquez-Werner-Werner-2018,Cedzich-Geib-Stahl-Velazquez-Werner-Werner-2018,Suzuki-Tanaka-2019,Matsuzawa-2020} as a direct application of \cref{theorem: two-phase case}(i). More precisely, we consider a concrete quantum walk model defined on the underlying Hilbert space $\ell^2(\Z, \C^2),$ which unifies all of the following models: the one-dimensional quantum walk considered in \cite{Ambainis-Bach-Nayak-Vishwanath-Watrous-2001,Konno-2002,Suzuki-2016}, Kitagawa's split-step quantum walk \cite{Kitagawa-Rudner-Berg-Demler-2010,Kitagawa-Broome-Fedrizzi-Rudner-Berg-Kassal-Aspuru-Demler-White-2012,Kitagawa_2012}, another split-step quantum walk in  \cite{Fuda-Funakawa-Suzuki-2017,Fuda-Funakawa-Suzuki-2018,Fuda-Funakawa-Suzuki-2019,Suzuki-Tanaka-2019,Matsuzawa-2020}. We compute the following two associated topological invariants: (i) a certain well-defined Fredholm index known as the \textbi{Witten index}, and (ii) the essential spectrum of the evolution operator. The complete classification of these invariants can be found in \cref{theorem: split-step}(i),(ii) respectively. The precise statement of \cref{theorem: split-step}, including the definition of the model, appears in \cref{section: statement of the main theorem}. Proofs of \cref{theorem: split-step}(i),(ii) will be given in \cref{section: fredholm index},\cref{section: essential spectrum} respectively. This paper concludes with several remarks in \cref{section: concluding remarks}.

\subsection{Statement of the main theorem (Theorem B)}
\label{section: statement of the main theorem}

What follows is a brief overview of \cite[\textsection 2]{Suzuki-2019} and \cite[\textsection 2.1]{Suzuki-Tanaka-2019}. An operator $\varGamma$ defined on an abstract Hilbert space $\cH$ is called an \textbi{involution}, if $\varGamma^2 = 1,$ where $1$ denotes the identity operator on $\cH.$ Note first that any unitary involution is self-adjoint (in fact, if an operator possess any two of the properties ``self-adjoint'', ``unitary'', ``involutory'', then it automatically has the third). The two operators $-1,1$ are referred to as \textbi{trivial unitary involutions}. We have $\cH = \ker(\varGamma - 1) \oplus \ker(\varGamma + 1)$ for any unitary involution $\varGamma.$ The model we shall consider in this section is based on the following simple finite-dimensional example;

\begin{eg}
\label{example: unitary involution}
If $X$ denotes a $2 \times 2$ matrix viewed as an operator on $\C^2,$ then $X$ is a non-trivial unitary involution if and only if there exist $\alpha \in \R$ and $\beta \in \C$ satisfying the following equalities:
\begin{equation}
\label{equation: unitary involutory matric}
X =
\begin{pmatrix}
\alpha & \beta \\
\beta^* & -\alpha
\end{pmatrix}, \qquad \alpha^2 + |\beta|^2 = 1.
\end{equation}
Note that the spectrum of such a matrix $X$ is $\{1,-1\},$ and so the following diagonalisation is possible;
\begin{equation}
\label{equation: diagonalisation of unitary involutory matrix}
\eta^*
\begin{pmatrix}
\alpha & \beta \\
\beta^* & -\alpha
\end{pmatrix}
\eta
=
\begin{pmatrix}
1 & 0 \\
0 & -1
\end{pmatrix}, \qquad
\eta :=
\frac{1}{\sqrt{2}}
\begin{pmatrix}
1 & 0 \\
0 & e^{-i\Theta}
\end{pmatrix}
\begin{pmatrix}
\alpha_+ & -\alpha_- \\
\alpha_-   & \alpha_+
\end{pmatrix},
\end{equation}
where $\Theta$ is any real number satisfying $\beta = e^{i \Theta} |\beta|$ and $\alpha_\pm := \sqrt{1 \pm \alpha}.$ Note that $\eta$ is unitary.
\end{eg}

A \textbi{chiral pair} on $\cH$ is any pair $(\varGamma, \varGamma')$ of two unitary involutions $\varGamma, \varGamma'$ on $\cH.$ Given such a pair $(\varGamma, \varGamma'),$ the operator $U := \varGamma \varGamma'$ is called the \textbi{evolution operator} of $(\varGamma, \varGamma').$ Let $R, Q$ be the real and imaginary parts of $U$ respectively:
\[
R := \Re U = \frac{U + U^*}{2} = \frac{\{ \varGamma, \varGamma' \}}{2}, \qquad
Q := \Im U = \frac{U - U^*}{2i} = \frac{[ \varGamma, \varGamma']}{2i},
\]
where $\{X,Y\} := XY + YX$ and $[X,Y] := XY - YX.$ Note that the evolution operator $U$ satisfies the following \textbi{chiral symmetry conditions} with respect to $\varGamma, \varGamma'$ respectively:
\begin{align}
\label{equation: chiral symmetry}
U^* = \varGamma' \varGamma = \varGamma^2 \varGamma' \varGamma = \varGamma (\varGamma \varGamma')\varGamma = \varGamma U \varGamma, \qquad
U^* = \varGamma' \varGamma = \varGamma' \varGamma (\varGamma')^2 = \varGamma' (\varGamma \varGamma') \varGamma' = \varGamma' U \varGamma'.
\end{align}
The above equality immediately implies the commutation relations $[\varGamma, R] = [\varGamma', R] = 0$ and anti-commutation relations $\{\varGamma, Q\} = \{\varGamma', Q\} = 0,$ and so $R, Q, U$ admit the following block-operator matrix representations with respect to $\cH = \ker(\varGamma - 1) \oplus \ker(\varGamma + 1)  = \ker(\varGamma' - 1) \oplus \ker(\varGamma' + 1):$
\begin{align}
\label{equation: representation of R}
R &=
\begin{pmatrix}
R_1 & 0 \\
0 & R_2
\end{pmatrix}_{\ker(\varGamma - 1) \oplus \ker(\varGamma + 1)}
=
\begin{pmatrix}
R'_1 & 0 \\
0 & R'_2
\end{pmatrix}_{\ker(\varGamma' - 1) \oplus \ker(\varGamma' + 1)}, \\
\label{equation: representation of Q}
Q &=
\begin{pmatrix}
0 & Q_0^* \\
Q_0 & 0
\end{pmatrix}_{\ker(\varGamma - 1) \oplus \ker(\varGamma + 1)}
=
\begin{pmatrix}
0 & (Q'_0)^* \\
Q'_0 & 0
\end{pmatrix}_{\ker(\varGamma' - 1) \oplus \ker(\varGamma' + 1)}, \\
\label{equation: representation of U}
U &=
\begin{pmatrix}
R_1 & iQ_0^* \\
iQ_0 & R_2
\end{pmatrix}_{\ker(\varGamma - 1) \oplus \ker(\varGamma + 1)}
=
\begin{pmatrix}
R'_1 & i(Q'_0)^* \\
iQ'_0 & R'_2
\end{pmatrix}_{\ker(\varGamma' - 1) \oplus \ker(\varGamma' + 1)},
\end{align}
where $R_j, R'_j$ are self-adjoint for each $j=1,2.$ The chiral pair $(\varGamma, \varGamma')$ is said to be \textbi{Fredholm}, if the operator $Q$ is Fredholm (or, equivalently, $Q_0, Q'_0$ are Fredholm). In this case, we define the \textbi{Witten indices} of the pairs $(\varGamma, \varGamma')', (\varGamma, \varGamma)$ by the following formulas respectively:
\[
\ind(\varGamma, \varGamma') := \ind Q_0, \qquad \ind(\varGamma', \varGamma) := \ind Q'_0,
\]
where $\ind Q_0, \ind Q'_0$ denote the Fredholm indices of $Q_0, Q'_0$ respectively. The Witten indices introduced above are \textbi{unitarily invariant} in the following precise sense (see \cite[Theorem 3]{Suzuki-Tanaka-2019} for details); for any unitary operator $\epsilon$ on $\cH,$ we have that $(\varGamma, \varGamma')$ is Fredholm if and only if so is $(\epsilon^* \varGamma \epsilon, \epsilon^* \varGamma' \epsilon),$ and that in this case
\begin{equation}
\label{equation: unitary invariance of the witten index}
\ind(\varGamma, \varGamma') = \ind(\epsilon^* \varGamma \epsilon, \epsilon^* \varGamma' \epsilon), \qquad
\ind(\varGamma', \varGamma) = \ind(\epsilon^* \varGamma' \epsilon, \epsilon^* \varGamma \epsilon).
\end{equation}

With \cref{example: unitary involution} in mind, we are now in a position to state the main theorem of the current section;

\begin{mtheorem}
\label{theorem: split-step}
Let $\varGamma,\varGamma'$ be two unitary involutions defined respectively as the following block-operator matrices with respect to $\ell^2(\Z, \C^2) = \ell^2(\Z) \oplus \ell^2(\Z):$
\begin{align}
\label{equation: definition of shift operator}
\tag{B1}
\varGamma &:=
\begin{pmatrix}
1 & 0\\
0  &  L^*
\end{pmatrix}
\begin{pmatrix}
p   & q \\
q^* & -p
\end{pmatrix}
\begin{pmatrix}
1 & 0\\
0  &  L
\end{pmatrix}
=
\begin{pmatrix}
p & q L \\
 L^*q^*  & -p(\cdot - 1)
\end{pmatrix}, \\
\label{equation: definition of coin operator}
\tag{B2}
\varGamma' &:=
\begin{pmatrix}
a & b^* \\
b & -a
\end{pmatrix},
\end{align}
where we assume that two convergent $\R$-valued sequences $p = (p(x))_{x \in \Z}, a = (a(x))_{x \in \Z}$ and two convergent $\C$-valued sequences $q = (q(x))_{x \in \Z}, b = (b(x))_{x \in \Z}$ satisfy the following:
\begin{align}
\label{equation: p and q}
\tag{B3}
&p(x)^2 + |q(x)|^2 = 1, & &x \in \Z, \\
\label{equation: alpha and beta}
\tag{B4}
&a(x)^2 + |b(x)|^2 = 1, & &x \in \Z,  \\
\label{equation: limits of real sequences}
\tag{B5}
&p(\pm \infty) := \lim_{x \to \pm \infty} p(x) \in \R, && a(\pm \infty) := \lim_{x \to \pm \infty} a(x) \in \R, \\
\label{equation: limits of complex sequences}
\tag{B6}
&q(\pm \infty) := \lim_{x \to \pm \infty} q(x) \in \C,& &b(\pm \infty) := \lim_{x \to \pm \infty} b(x) \in \C, \\
\label{equation: limits of theta}
\tag{B7}
&\theta(\pm \infty) :=
\begin{cases}
\arg q(\pm \infty), & q(\pm \infty) \neq 0, \\
0,             & q(\pm \infty) = 0,
\end{cases} &
&
\phi(\pm \infty) :=
\begin{cases}
\arg b(\pm \infty), & b(\pm \infty) \neq 0, \\
0,                 & b(\pm \infty) = 0,
\end{cases}
\end{align}
where $\arg w$ of a non-zero complex number $w$ is uniquely defined by $w = e^{i \arg w}$ and $\arg w \in [0,2\pi).$ Then the following two assertions hold true:
\begin{enumerate}[(i)]
\item The chiral pair $(\varGamma,\varGamma' )$ is Fredholm if and only if $|p(\star)| \neq |a(\star)|$ for each $\star = \pm \infty.$ In this case, we have
\begin{align}
\label{equation1: Witten index formula}
\tag{B8}
&\ind (\varGamma,\varGamma') =
\begin{cases}
0, &  |p(-\infty)| < |a(-\infty)| \mbox{ and }  |p(+\infty)| < |a(+\infty)|, \\
+\sgn p(+\infty), &  |p(-\infty)| < |a(-\infty)| \mbox{ and }  |p(+\infty)| > |a(+\infty)|, \\
-  \sgn p(-\infty), &  |p(-\infty)| > |a(-\infty)| \mbox{ and } |p(+\infty)| < |a(+\infty)|, \\
+\sgn p(+\infty) - \sgn p(-\infty), &  |p(-\infty)| >  |a(-\infty)| \mbox{ and }  |p(+\infty)| > |a(+\infty)|,
\end{cases} \\
\label{equation2: Witten index formula}
\tag{B9}
&\ind (\varGamma',\varGamma) =
\begin{cases}
-\sgn a(+\infty) + \sgn a(-\infty), &  |p(-\infty)| < |a(-\infty)| \mbox{ and }  |p(+\infty)| < |a(+\infty)|, \\
+ \sgn a(-\infty), &  |p(-\infty)| < |a(-\infty)| \mbox{ and }  |p(+\infty)| > |a(+\infty)|, \\
- \sgn a(+\infty), &  |p(-\infty)| > |a(-\infty)| \mbox{ and }  |p(+\infty)| < |a(+\infty)|, \\
0, &  |p(-\infty)| >  |a(-\infty)| \mbox{ and }  |p(+\infty)| > |a(+\infty)|,
\end{cases}
\end{align}
where the sign function $\sgn : \R \to \{-1,1\}$ is defined by
\begin{equation}
\tag{B10}
\label{equation: definition of sign function}
\sgn x :=
\begin{cases}
\frac{x}{|x|}, & x \neq 0, \\
1, & x = 0.
\end{cases}
\end{equation}

\item The essential spectrum of the evolution operator $U := \varGamma \varGamma'$ is given by
\begin{align}
\label{equation: essential spectrum of U}
\tag{B11}
\ess(U) &= \bigcup_{\star = \pm \infty} \left\{z \in \T \mid \sgn(p(\star) a(\star)) \cdot \Re z \in I(\star) \right\},  \\
\label{equation: definition of Istar}
\tag{B12}
I(\star) &:= [|p(\star) a(\star)| - |q(\star) b(\star)|, |p(\star) a(\star)| + |q(\star) b(\star)|], \qquad \star = \pm \infty.
\end{align}
Moreover, $\ess(U)$ does not contain both $-1,+1$ if and only if $|p(\star)| \neq |a(\star)|$ for each $\star = \pm \infty.$
\end{enumerate}
\end{mtheorem}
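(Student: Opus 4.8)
The plan is to deduce both parts of \cref{theorem: split-step} from \cref{theorem: two-phase case}. First note that $U:=\varGamma\varGamma'$ is strictly local on $\ell^2(\Z,\C^2)$: the factor $\varGamma$ is strictly local by \cref{equation: definition of shift operator} (its entries involve only the shifts $L^{0},L^{\pm1}$), the factor $\varGamma'$ is strictly local (a multiplication operator), and products of strictly local operators are strictly local. Since the sequences $p,q,a,b$ all converge at $\pm\infty$, every coefficient sequence of $U$ — and hence of $Q:=\Im U=(U-U^{*})/2i$ — converges at $\pm\infty$, so $U$ and $Q$ satisfy the hypothesis \cref{equation: two-phase assumptions} of \cref{theorem: two-phase case}; moreover $U(\star)=\varGamma(\star)\varGamma'(\star)$ for $\star=\pm\infty$, where $\varGamma(\star),\varGamma'(\star)$ are the uniform operators assembled from the asymptotic coefficients of $\varGamma,\varGamma'$.

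For part (i) the obstacle is that the Witten indices are $\ind Q_0$ and $\ind Q'_0$, not $\ind Q$ (the latter being $0$ since $Q$ is self-adjoint), so one must isolate an off-diagonal block. I would diagonalise the relevant involution pointwise by means of \cref{example: unitary involution}. Writing $\eta_p=\bigoplus_{x}\eta_p(x)$ for the multiplication unitary that diagonalises $\left(\begin{smallmatrix}p&q\\q^{*}&-p\end{smallmatrix}\right)$ and setting $W:=\left(\begin{smallmatrix}1&0\\0&L^{*}\end{smallmatrix}\right)\eta_p$, one has $W^{*}\varGamma W=\mathrm{diag}(1,-1)$, so in the $W$-picture the $\varGamma$-eigenspace decomposition becomes the standard one, and $Q_0$ equals (up to the scalar $i$) the $(2,1)$-block of the strictly local operator $W^{*}\varGamma'W=\eta_p^{*}\,\mathrm{diag}(1,L)\,\varGamma'\,\mathrm{diag}(1,L^{*})\,\eta_p$; by the unitary invariance \cref{equation: unitary invariance of the witten index} this does not change $\ind(\varGamma,\varGamma')$. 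Symmetrically, diagonalising $\varGamma'$ by a multiplication unitary $\eta'$ exhibits $Q'_0$ as the $(2,1)$-block of the strictly local operator $(\eta')^{*}\varGamma\,\eta'\cdot\mathrm{diag}(1,-1)$. This reduction is precisely the content of \cref{lemma: wada decomposition}, which I would invoke; in both cases one arrives at a strictly local operator on $\ell^2(\Z)$, i.e.\ the case $n=1$ of \cref{theorem: two-phase case}.

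Then I apply \cref{theorem: two-phase case}(i) with $n=1$: $Q_0$ (resp.\ $Q'_0$) is Fredholm iff each symbol $\widehat{Q_0}(\cdot,\star)$ is nowhere vanishing on $\T$, in which case $\ind(\varGamma,\varGamma')=\wn\bigl(\widehat{Q_0}(\cdot,+\infty)\bigr)-\wn\bigl(\widehat{Q_0}(\cdot,-\infty)\bigr)$ and similarly for $\ind(\varGamma',\varGamma)$. The bulk of the work — and the main obstacle — is the explicit computation of these symbols. Since $\varGamma,\varGamma'$ carry only $L^{0},L^{\pm1}$ and $W,\eta'$ contribute at most one further shift, $\widehat{Q_0}(\cdot,\star)$ is an affine Laurent polynomial $c_{-1}(\star)z^{-1}+c_0(\star)+c_1(\star)z$ whose coefficients are built from $p(\star),q(\star),a(\star),b(\star)$, the phases $\theta(\star),\phi(\star)$ of \cref{equation: limits of theta}, and the square roots $\sqrt{1\pm p(\star)},\sqrt{1\pm a(\star)}$ arising from $\eta_p,\eta'$. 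Writing $z\,\widehat{Q_0}(z,\star)=c_1(\star)z^{2}+c_0(\star)z+c_{-1}(\star)$, a Cauchy--Schwarz estimate comparing $(|p(\star)|,|q(\star)|)$ with $(|a(\star)|,|b(\star)|)$ — using $p(\star)^{2}+|q(\star)|^{2}=a(\star)^{2}+|b(\star)|^{2}=1$ — shows that this quadratic has a zero on $\T$ exactly when $|p(\star)|=|a(\star)|$, and otherwise a root count inside the disc gives $\wn\bigl(\widehat{Q_0}(\cdot,\star)\bigr)=0$ when $|p(\star)|<|a(\star)|$ and $\pm1$ when $|p(\star)|>|a(\star)|$ (the sign being $\sgn p(\star)$ for $Q_0$ and $\mp\sgn a(\star)$ for $Q'_0$, the exact signs to be settled in the computation). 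Feeding the four sign/magnitude combinations at $\pm\infty$ into the index formula then yields the case tables \cref{equation1: Witten index formula}, \cref{equation2: Witten index formula}. The delicate point is the bookkeeping at the degenerate asymptotic values $q(\star)=0$ or $b(\star)=0$, where $\eta_p(\star)$ resp.\ $\eta'(\star)$ fails to be continuous and where one must verify that the reduced operators still possess well-defined asymptotic limits; the conventions \cref{equation: limits of theta} are tailored to this.

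For part (ii) I apply \cref{theorem: two-phase case}(ii): $\ess(U)=\ess(U(+\infty))\cup\ess(U(-\infty))$ with $\ess(U(\star))=\bigcup_{z\in\T}\sigma\bigl(\widehat U(z,\star)\bigr)$, where
\[
\widehat U(z,\star)=\widehat\varGamma(z,\star)\,\widehat{\varGamma'}(z,\star)=\begin{pmatrix}p(\star) & q(\star)z\\ q(\star)^{*}z^{-1} & -p(\star)\end{pmatrix}\begin{pmatrix}a(\star) & b(\star)^{*}\\ b(\star) & -a(\star)\end{pmatrix}.
\]
Since $U$ is unitary, each $\widehat U(z,\star)$ is unitary, and $\det\widehat U(z,\star)=\det\widehat\varGamma(z,\star)\det\widehat{\varGamma'}(z,\star)=(-1)(-1)=1$; hence its two eigenvalues form a conjugate pair on $\T$, each with real part $\tfrac12\operatorname{tr}\widehat U(z,\star)=p(\star)a(\star)+\Re\bigl(q(\star)b(\star)z\bigr)$. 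As $z$ ranges over $\T$ this real part sweeps the interval $[\,p(\star)a(\star)-|q(\star)b(\star)|,\ p(\star)a(\star)+|q(\star)b(\star)|\,]$, so $\ess(U(\star))$ is the corresponding arc of $\T$, which after rewriting via $\sgn$ is \cref{equation: essential spectrum of U}--\cref{equation: definition of Istar}. Finally, membership of $\pm1$ in $\ess(U(\star))$ forces $1\le p(\star)a(\star)+|q(\star)b(\star)|\le|p(\star)||a(\star)|+|q(\star)||b(\star)|\le1$, whose equality case forces $|p(\star)|=|a(\star)|$; conversely $|p(\star)|=|a(\star)|$ places one of $\pm1$ into $\ess(U(\star))$. (Equivalently: $U$ being normal gives $\ess(Q)=\Im(\ess(U))$, so $\ess(U)$ meets $\{\pm1\}$ iff $0\in\ess(Q)$ iff $Q$ is not Fredholm, iff by part (i) $|p(\star)|=|a(\star)|$ for some $\star$.) This yields the concluding assertion of (ii).
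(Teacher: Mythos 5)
Your overall architecture coincides with the paper's: part (ii) is essentially the paper's own argument (apply \cref{theorem: two-phase case}(ii), note $\det\hat{U}(z,\star)=1$ so the eigenvalues are a conjugate pair with real part $\tfrac12\operatorname{tr}\hat{U}(z,\star)=p(\star)a(\star)+\Re(q(\star)b(\star)z)$, and characterise $\pm1\in\ess(U)$ via the equality case of $|p(\star)a(\star)|+|q(\star)b(\star)|\le 1$), and for part (i) your reduction via diagonalising unitaries to the $(2,1)$-block of a scalar strictly local operator is exactly \cref{lemma: wada decomposition}, followed by \cref{theorem: two-phase case}(i) with $n=1$.

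However, there is a genuine gap in part (i), at precisely the point you flag and then wave off. After the reduction, the coefficients of $Q_{\epsilon_0}$ involve the pointwise phases $e^{\pm i\theta(x)}$ with $q(x)=|q(x)|e^{i\theta(x)}$ (and similarly $e^{\pm i\phi(x)}$ for $Q_{\gamma_0}$), and when $q(\star)=0$ the sequence $e^{i\theta(x)}$ need not converge as $x\to\star$, so the hypothesis \cref{equation: two-phase assumptions} of \cref{theorem: two-phase case} simply fails for $Q_{\epsilon_0}$ itself; the convention \cref{equation: limits of theta} only fixes a value $\theta(\star)$, it does not make the coefficients converge. The paper's resolution (\cref{lemma: phase problem}) is a concrete gauge transformation: one conjugates $Q_{\epsilon_0}$ by diagonal phase unitaries $e^{i\theta_\pm}$, where $\theta_\pm(x):=\theta(x)$ if $p(\star(x))=\pm1$ and $:=0$ otherwise, and exploits the constraint $p^2+|q|^2=1$: if $q(\star)=0$ then $|p(\star)|=1$, so one of $p_\pm(\star)=0$ kills one of the two shift terms, and the chosen phases cancel the non-convergent factor in the surviving term while leaving the $|q|$-coefficient (which tends to $0$) harmless. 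Since conjugation by invertibles preserves the Fredholm index, one may then apply \cref{theorem: two-phase case}(i) to $e^{-i\theta_+}Q_{\epsilon_0}e^{i\theta_-}$. Without some such device your application of Theorem A is not justified in the degenerate cases. Separately, your winding-number step defers ``the exact signs to be settled in the computation''; since the entire content of \crefrange{equation1: Witten index formula}{equation2: Witten index formula} lies in those signs, this must actually be carried out (the paper does it geometrically in \cref{lemma: matsuzawa function is an ellipse}, showing the image of the symbol is an ellipse traversed once with orientation $\sgn\alpha_1$, enclosing the origin iff $|\alpha_1|>|\alpha_2|$; your root-count inside the disc would work too, but needs the case $c_1=0$ handled separately).
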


The chiral pair $(\varGamma,\varGamma')$ in \cref{theorem: split-step} is a one-dimensional split-step quantum walk model considered in \cite{Fuda-Funakawa-Suzuki-2017,Fuda-Funakawa-Suzuki-2018,Fuda-Funakawa-Suzuki-2019,Suzuki-Tanaka-2019,Matsuzawa-2020} with a modification that the two parameters $p, q$ depend freely on $\Z.$ Note that this seemingly minor modification leads to the new Witten index formula \cref{equation1: Witten index formula} taking values from $\{-2,-1,0,+1,+2\},$ where the indices $\pm 2$ never appear in the existing formula (see \cite[(A2)]{Suzuki-Tanaka-2019} or \cite[Theorem 1.1]{Matsuzawa-2020} for details), since $p, q$ are kept constant in the five papers mentioned above. The index formula \cref{equation2: Witten index formula} is new to the best of the author's knowledge. As we shall see in \cref{section: topologically protected bound states}, the two index formulas \crefrange{equation1: Witten index formula}{equation2: Witten index formula} naturally provide a lower bound for $\dim \ker(U \mp 1).$

\subsection{Proof of Theorem B (i)}
\label{section: fredholm index}

What follows is a generalisation of \cite[\textsection 3]{Suzuki-Tanaka-2019}. With the notation introduced in \cref{theorem: split-step}, on one hand, the imaginary part $Q$ admits two off-diagonal matrix representations as in \cref{equation: representation of Q}, where $\ind(\varGamma,\varGamma') = \ind Q_0$ and $\ind(\varGamma',\varGamma) = \ind Q'_0$ by definition. On the other hand, an easy computation shows that the same operator $Q$ does \textit{not} admit an off-diagonal representation with respect to the orthogonal decomposition $\ell^2(\Z,\C^2) = \ell^2(\Z) \oplus \ell^2(\Z).$ The unitary invariance property \cref{equation: unitary invariance of the witten index} motivates us to construct explicit unitary operators $\epsilon, \gamma : \ell^2(\Z) \to \ell^2(\Z),$ such that $\epsilon^* Q \epsilon, \gamma^* Q \gamma$ become off-diagonal with respect to this decomposition. 

\begin{lemma}
\label{lemma: wada decomposition}
Let $(\varGamma,\varGamma')$ be the chiral pair in \cref{theorem: split-step}, and let $U := \varGamma \varGamma'$ be the associated evolution operator. Let $R, Q$ be the real and imaginary parts of $U$ respectively. For each $x \in \Z,$ let $\theta(x), \phi(x)$ be any real numbers satisfying $q(x) = |q(x)|e^{i \theta(x)}$ and $b(x) = |b(x)|e^{i \phi(x)}.$ Let $p_\pm := \sqrt{1 \pm p},$ and let $a_\pm := \sqrt{1 \pm a}.$ Let
\begin{equation}
\label{equation: wada transform}
\epsilon :=
\frac{1}{\sqrt{2}}
\begin{pmatrix}
1 & 0 \\
0  &  L^*e^{-i \theta}
\end{pmatrix}
\begin{pmatrix}
p_+  & -p_- \\
p_-  & p_+
\end{pmatrix}, \qquad
\gamma :=
\frac{1}{\sqrt{2}}
\begin{pmatrix}
1 & 0 \\
0 & e^{i \phi}
\end{pmatrix}
\begin{pmatrix}
a_+ & -a_- \\
a_-   & a_+
\end{pmatrix}.
\end{equation}
Then the unitary operators $\epsilon, \gamma$ give the following decompositions with respect to $\ell^2(\Z,\C^2) = \ell^2(\Z) \oplus \ell^2(\Z):$
\begin{align}
\label{equation1: wada decomposition}
&\epsilon^* \varGamma \epsilon
=
\begin{pmatrix}
1 & 0 \\
0 & -1
\end{pmatrix},
&&\epsilon^* U \epsilon
=
\begin{pmatrix}
R_{\epsilon_1} & iQ_{\epsilon_0}^* \\
iQ_{\epsilon_0} & R_{\epsilon_2}
\end{pmatrix},
&&\epsilon^* R \epsilon
=
\begin{pmatrix}
R_{\epsilon_1} & 0 \\
0 & R_{\epsilon_2}
\end{pmatrix},
&&\epsilon^* Q \epsilon
=
\begin{pmatrix}
0 & Q_{\epsilon_0}^* \\
Q_{\epsilon_0} & 0
\end{pmatrix}, \\
\label{equation2: wada decomposition}
&\gamma^* \varGamma' \gamma
=
\begin{pmatrix}
1 & 0 \\
0 & -1
\end{pmatrix},
&&\gamma^* U \gamma
=
\begin{pmatrix}
R_{\gamma_1} & iQ_{\gamma_0}^* \\
iQ_{\gamma_0} & R_{\gamma_2}
\end{pmatrix},
&&\gamma^* R \gamma
=
\begin{pmatrix}
R_{\gamma_1} & 0 \\
0 & R_{\gamma_2}
\end{pmatrix},
&&\gamma^* Q \gamma
=
\begin{pmatrix}
0 & Q_{\gamma_0}^* \\
Q_{\gamma_0} & 0
\end{pmatrix},
\end{align}
where the six operators $Q_{\epsilon_0},Q_{\gamma_0},R_{\epsilon_1},R_{\gamma_1},R_{\epsilon_2},R_{\gamma_2}$ are defined respectively by the following formulas:
\begin{align}
\label{equation: definition of Qepsilon}
-2i Q_{\epsilon_0} &:= p_+  e^{i \theta}L b p_+ - p_- b^*  L^* e^{-i \theta}p_-  - |q|(a + a(\cdot + 1)), \\
\label{equation: definition of Qgamma}
2i Q_{\gamma_0} &:=  a_+ e^{-i\phi}L^* q^* a_+ - a_- q L e^{i\phi} a_- - |b|(p + p(\cdot - 1)), \\
\label{equation1: definition of Repsilon}
2R_{\epsilon_1} &:= p_-  e^{i \theta}L b p_+ + p_+ b^*  L^* e^{-i \theta} p_-  + p_+^2 a - p_-^2 a(\cdot + 1), \\
\label{equation1: definition of Rgamma}
2R_{\gamma_1} &:= a_- e^{-i\phi}L^* q^* a_+ + a_+ q L e^{i\phi} a_- + a_+^2p - a_-^2p(\cdot - 1), \\
\label{equation2: definition of Repsilon}
2R_{\epsilon_2} &:= p_+  e^{i \theta}L b p_- + p_- b^*  L^* e^{-i \theta} p_+  - p_-^2a + p_+^2 a(\cdot + 1), \\
\label{equation2: definition of Rgamma}
2R_{\gamma_2} &:= a_+ e^{-i\phi}L^* q^* a_- + a_- q L e^{i\phi} a_+ - a_-^2p + a_+^2p(\cdot - 1).
\end{align}
Moreover, the chiral pair $(\varGamma,\varGamma')$ is Fredholm if and only if $Q_{\epsilon_0},Q_{\gamma_0}$ are Fredholm. In this case, 
\begin{equation}
\label{equation: first index formula}
\ind(\varGamma,\varGamma') = \ind Q_{\epsilon_0}, \qquad \ind(\varGamma',\varGamma) = \ind Q_{\gamma_0}.
\end{equation}
\end{lemma}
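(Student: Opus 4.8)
The plan is to reduce the entire statement to the elementary diagonalisation of \cref{example: unitary involution}. \emph{First I would diagonalise $\varGamma$ and $\varGamma'$.} Write $V$ for the rightmost factor $\mathrm{diag}(1,L)$ in \cref{equation: definition of shift operator} and $M$ for the central factor $\bigl(\begin{smallmatrix}p&q\\q^*&-p\end{smallmatrix}\bigr)$, so that $\varGamma = V^*MV$ with $V$ unitary. Since $p,q$ are multiplication operators with $p^2+|q|^2\equiv 1$, $M$ is a self-adjoint unitary involution, and \cref{equation: diagonalisation of unitary involutory matrix} applied entrywise — with $\Theta=\theta$ and $\alpha_\pm=p_\pm$ — produces a unitary $\eta_0$ with $\eta_0^*M\eta_0=\mathrm{diag}(1,-1)$. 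A direct check identifies $V^*\eta_0$ with the operator $\epsilon$ of \cref{equation: wada transform}, whence $\epsilon^*\varGamma\epsilon=\mathrm{diag}(1,-1)=:\sigma$. Matching $\varGamma'$ of \cref{equation: definition of coin operator} against \cref{example: unitary involution} in the same way — now $\alpha=a$, $\beta=b^*$, $\Theta=-\phi$, $\alpha_\pm=a_\pm$ — identifies the diagonaliser with $\gamma$ of \cref{equation: wada transform}, so $\gamma^*\varGamma'\gamma=\sigma$. (Here $p_\pm=\sqrt{1\pm p}$, $a_\pm=\sqrt{1\pm a}$ make sense as bounded positive multiplication operators because $|p|,|a|\le1$ pointwise, and $V,\epsilon,\gamma$ are unitary because $L$ and $e^{\pm i\theta},e^{\pm i\phi}$ are.)

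\emph{Next I would extract the block structure and the explicit formulas.} Conjugating the relations $[\varGamma,R]=[\varGamma',R]=0$ and $\{\varGamma,Q\}=\{\varGamma',Q\}=0$ (recorded just before \cref{equation: representation of R}) by $\epsilon$, resp.\ $\gamma$, and using $\sigma^2=1$, forces $\epsilon^*R\epsilon$ (resp.\ $\gamma^*R\gamma$) to be block-diagonal and $\epsilon^*Q\epsilon$ (resp.\ $\gamma^*Q\gamma$) to be block-off-diagonal with respect to $\ell^2(\Z)\oplus\ell^2(\Z)=\ker(\sigma-1)\oplus\ker(\sigma+1)$; since $U=R+iQ$ with $R,Q$ self-adjoint, this gives the shapes \cref{equation1: wada decomposition} and \cref{equation2: wada decomposition} with self-adjoint $R_{\epsilon_j},R_{\gamma_j}$, and it \emph{defines} $Q_{\epsilon_0},R_{\epsilon_1},R_{\epsilon_2}$ (resp.\ $Q_{\gamma_0},R_{\gamma_1},R_{\gamma_2}$) as the lower-left and the two diagonal blocks of $\epsilon^*U\epsilon$ (resp.\ $\gamma^*U\gamma$). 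To pin them down, note $\epsilon^*U\epsilon=\sigma\,(\epsilon^*\varGamma'\epsilon)$, so one expands $\epsilon^*\varGamma'\epsilon$ by multiplying out the three factors of $\epsilon$ around $\varGamma'$, using $LaL^*=a(\cdot+1)$, commutativity of multiplication operators, $p_\pm^2=1\pm p$ and $p_+p_-=\sqrt{1-p^2}=|q|$; comparison with the shape \cref{equation: representation of U} reads off \cref{equation: definition of Qepsilon}, \cref{equation1: definition of Repsilon}, \cref{equation2: definition of Repsilon}. The dual identity $\gamma^*U\gamma=(\gamma^*\varGamma\gamma)\,\sigma$, expanded via $L^*pL=p(\cdot-1)$ and $a_+a_-=|b|$, gives \cref{equation: definition of Qgamma}, \cref{equation1: definition of Rgamma}, \cref{equation2: definition of Rgamma}.

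\emph{Finally I would settle Fredholmness and the index.} By definition $(\varGamma,\varGamma')$ is Fredholm iff $Q$ is, iff $\epsilon^*Q\epsilon=\bigl(\begin{smallmatrix}0&Q_{\epsilon_0}^*\\ Q_{\epsilon_0}&0\end{smallmatrix}\bigr)$ is, iff $Q_{\epsilon_0}$ is (and equally iff $Q_{\gamma_0}$ is). For the indices I would invoke the unitary invariance \cref{equation: unitary invariance of the witten index}: $\ind(\varGamma,\varGamma')=\ind(\sigma,\epsilon^*\varGamma'\epsilon)$, and since the evolution operator of the pair $(\sigma,\epsilon^*\varGamma'\epsilon)$ is $\sigma\,\epsilon^*\varGamma'\epsilon=\epsilon^*U\epsilon$ with imaginary part $\epsilon^*Q\epsilon$, the right-hand side equals $\ind Q_{\epsilon_0}$ by the very definition of the Witten index. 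Likewise $\ind(\varGamma',\varGamma)=\ind(\sigma,\gamma^*\varGamma\gamma)$, whose evolution operator is $\sigma\,\gamma^*\varGamma\gamma=\gamma^*U^*\gamma$ with imaginary part $-\gamma^*Q\gamma$; hence this index is $\ind(-Q_{\gamma_0})=\ind Q_{\gamma_0}$, which is \cref{equation: first index formula}.

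\emph{Expected difficulty.} Conceptually there is no obstacle once the factorisation $\varGamma=V^*MV$ is recognised; the real work is the bookkeeping in the second step — tracking exactly where the shifts and the phases $e^{\pm i\theta},e^{\pm i\phi}$ land after the threefold matrix products — together with the mild care needed in the last step to match each Witten index to the correct block, noting in particular that the $\gamma$-picture naturally produces $U^*$ rather than $U$, so that the harmless overall sign in $\ind(-Q_{\gamma_0})=\ind Q_{\gamma_0}$ must be absorbed.
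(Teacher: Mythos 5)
Your proposal is correct and follows essentially the same route as the paper: diagonalise $\varGamma,\varGamma'$ pointwise via \cref{example: unitary involution} (with $\epsilon=(1\oplus L^*)\epsilon_0$), deduce the block shapes from the conjugated (anti)commutation relations, read off the six operators by expanding $\epsilon^*\varGamma'\epsilon$ and $\gamma^*\varGamma\gamma$, and transport the Witten indices by the unitary invariance \cref{equation: unitary invariance of the witten index}. The only cosmetic differences are in the final step, where the paper makes explicit the identification $\ker(\varGamma_\epsilon \mp 1)=\ell^2(\Z)\oplus\{0\}$, $\{0\}\oplus\ell^2(\Z)$ (the padding by the trivial operator $\{0\}\to\{0\}$ of index zero) that you compress into ``by the very definition,'' and where your $U^*$/sign observation $\ind(-Q_{\gamma_0})=\ind Q_{\gamma_0}$ replaces the paper's unstated ``analogous argument'' for the second formula in \cref{equation: first index formula}.
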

\begin{proof}
It follows from \cref{equation: diagonalisation of unitary involutory matrix} that we have the following diagonalisation:
\begin{align}
\epsilon_0^*
\begin{pmatrix}
p & q \\
q^* & -p
\end{pmatrix}
\epsilon_0
&=
\begin{pmatrix}
1 & 0 \\
0 & -1
\end{pmatrix},
&
\epsilon_0 &:=
\frac{1}{\sqrt{2}}
\begin{pmatrix}
1 & 0 \\
0 & e^{-i\theta}
\end{pmatrix}
\begin{pmatrix}
p_+ & -p_- \\
p_-   & p_+
\end{pmatrix}, \\
\gamma^*
\begin{pmatrix}
a & b^* \\
b & -a
\end{pmatrix}
\gamma
&=
\begin{pmatrix}
1 & 0 \\
0 & -1
\end{pmatrix},
&
\gamma &:=
\frac{1}{\sqrt{2}}
\begin{pmatrix}
1 & 0 \\
0 & e^{i \phi}
\end{pmatrix}
\begin{pmatrix}
a_+ & -a_- \\
a_-   & a_+
\end{pmatrix}.
\end{align}
The operator $\epsilon$ given by the first equality in \cref{equation: wada transform} can be written as the product $\epsilon = (1 \oplus L^*) \epsilon_0$ of two unitary operators $1 \oplus L^*$ and $\epsilon_0.$ With the first equality in \cref{equation: definition of shift operator}, we obtain
\[
\epsilon^* \varGamma \epsilon
=
\epsilon_0^*
\begin{pmatrix}
1 & 0 \\
0  &  L
\end{pmatrix}
\left(
\begin{pmatrix}
1 & 0 \\
0  &  L^*
\end{pmatrix}
\begin{pmatrix}
p   & q \\
q^* & -p
\end{pmatrix}
\begin{pmatrix}
1 & 0 \\
0  &  L
\end{pmatrix}
\right)
\begin{pmatrix}
1 & 0 \\
0  &  L^*
\end{pmatrix}
\epsilon_0 \\
=
\begin{pmatrix}
1 & 0 \\
0 & -1
\end{pmatrix}.
\]
Given an operator $X$ on $\ell^2(\Z,\C^2),$ we introduce the shorthand $X_\epsilon := \epsilon^* X \epsilon$ and $X_\eta := \eta^* X \eta.$ With this convention in mind, we have the commutation relations $[\varGamma_\epsilon, R_\epsilon] = [\varGamma'_\gamma, R_\gamma] = 0$ and anti-commutation relations $\{\varGamma_\epsilon, Q_\epsilon\} = \{\varGamma'_\gamma, Q_\gamma\} = 0,$ where $\varGamma_\epsilon = \varGamma'_\gamma = 1 \oplus -1$ with respect to $\ell^2(\Z,\C^2) = \ell^2(\Z) \oplus \ell^2(\Z).$ It follows that we have the following representations:
\begin{align}
\label{equation: epsilon representation}
R_\epsilon
&=
\begin{pmatrix}
R'_{\epsilon_1} & 0 \\
0 & R'_{\epsilon_2}
\end{pmatrix},
&
Q_\epsilon
&=
\begin{pmatrix}
0 & (Q'_{\epsilon_0})^* \\
Q'_{\epsilon_0} & 0
\end{pmatrix},
&
U_\epsilon
&= R_\epsilon + iQ_\epsilon =
\begin{pmatrix}
R'_{\epsilon_1} & i(Q'_{\epsilon_0})^* \\
iQ'_{\epsilon_0} & R'_{\epsilon_2}
\end{pmatrix}, \\
\label{equation: gamma representation}
R_\gamma
&=
\begin{pmatrix}
R'_{\gamma_1} & 0 \\
0 & R'_{\gamma_2}
\end{pmatrix},
&
Q_\gamma
&=
\begin{pmatrix}
0 & (Q'_{\gamma_0})^* \\
Q'_{\gamma_0} & 0
\end{pmatrix},
&
U_\gamma
&= R_\gamma + iQ_\gamma =
\begin{pmatrix}
R'_{\gamma_1} & i(Q'_{\gamma_0})^* \\
iQ'_{\gamma_0} & R'_{\gamma_2}
\end{pmatrix}.
\end{align}
It remains to show that the six operators introduced above coincide with the ones defined by the formulas \crefrange{equation: definition of Qepsilon}{equation2: definition of Rgamma}. Note that
\begin{align}
\label{equation1: Cepsilon}
2 \varGamma'_\epsilon
&=
2\varGamma_\epsilon U_\epsilon
=
2
\begin{pmatrix}
1 & 0 \\
0 & -1
\end{pmatrix}
\begin{pmatrix}
R'_{\epsilon_1} & i(Q'_{\epsilon_0})^* \\
iQ'_{\epsilon_0}   & R'_{\epsilon_2}  \\
\end{pmatrix}
=
\begin{pmatrix}
2R'_{\epsilon_1}     & 2i(Q'_{\epsilon_0})^* \\
-2iQ'_{\epsilon_0}   & -2R'_{\epsilon_2}  \\
\end{pmatrix},\\
\label{equation1: Sgamma}
2\varGamma_\gamma
&=
2 U_\gamma \varGamma'_\gamma
=
2
\begin{pmatrix}
R'_{\gamma_1}    & i(Q'_{\gamma_0})^* \\
iQ'_{\gamma_0}   & R'_{\gamma_2}  \\
\end{pmatrix}
\begin{pmatrix}
1 & 0 \\
0 & -1
\end{pmatrix}
=
\begin{pmatrix}
2R'_{\gamma_1}    & -2i(Q'_{\gamma_0})^* \\
2iQ'_{\gamma_0}   & -2R'_{\gamma_2}  \\
\end{pmatrix}.
\end{align}
It remains to compute $2\varGamma'_\epsilon, 2\varGamma_\gamma$;
\begin{align*}
2\epsilon^*
\begin{pmatrix}
0 & b^* \\
b & 0 \\
\end{pmatrix}
\epsilon
&=
\begin{pmatrix}
p_-  e^{i \theta}L b p_+ + p_+ b^*  L^*e^{-i \theta} p_-    & -p_-  e^{i \theta}L b p_- + p_+ b^*  L^* e^{-i \theta} p_+   \\
p_+  e^{i \theta}L b p_+ - p_- b^*  L^*e^{-i \theta} p_-  &
-p_+  e^{i \theta}L b p_- - p_- b^*  L^*e^{-i \theta} p_+
\end{pmatrix}, \\
2\epsilon^*
\begin{pmatrix}
a & 0 \\
0 & -a \\
\end{pmatrix}
\epsilon
&=
\begin{pmatrix}
p_+^2 a - p_-^2 a(\cdot + 1) & -|q|(a + a(\cdot + 1)) \\
-|q|(a + a(\cdot + 1))  & p_-^2 a - p_+^2a(\cdot + 1)
\end{pmatrix}, \\
2\gamma^*
\begin{pmatrix}
0 & q L \\
L^*q^*  & 0
\end{pmatrix}
\gamma
&=
\begin{pmatrix}
a_- e^{-i\phi}L^* q^* a_+ + a_+ q L e^{i\phi} a_- & -a_- e^{-i\phi}L^* q^* a_- + a_+ q L e^{i\phi} a_+ \\
a_+ e^{-i\phi}L^* q^* a_+ - a_- q L e^{i\phi} a_-  & -a_+ e^{-i\phi}L^* q^* a_- - a_- q L e^{i\phi} a_+
\end{pmatrix}, \\
2\gamma^*
\begin{pmatrix}
p & 0 \\
0  & -p(\cdot - 1)
\end{pmatrix}
\gamma
&=
\begin{pmatrix}
a_+^2p - a_-^2p(\cdot - 1) & -|b|(p + p(\cdot - 1)) \\
-|b|(p + p(\cdot - 1))  & a_-^2p - a_+^2p(\cdot - 1)
\end{pmatrix}.
\end{align*}
It follows from the above equalities that
\begin{align}
\label{equation2: Cepsilon}
2\varGamma'_\epsilon &=
2\epsilon^*
\begin{pmatrix}
0 & b^* \\
b & 0 \\
\end{pmatrix}
\epsilon
+
2\epsilon^*
\begin{pmatrix}
a & 0 \\
0 & -a \\
\end{pmatrix}
=
\begin{pmatrix}
2R_{\epsilon_1} & 2i Q_{\epsilon_0}^* \\
-2i Q_{\epsilon_0} & -2R_{\epsilon_2}
\end{pmatrix}, \\
\label{equation2: Sgamma}
2\varGamma_\gamma &=
2\gamma^*
\begin{pmatrix}
0 & q L \\
L^*q^*  & 0
\end{pmatrix}
\gamma
+
2\gamma^*
\begin{pmatrix}
p & 0 \\
0  & -p(\cdot - 1)
\end{pmatrix}
\gamma
=
\begin{pmatrix}
2R_{\gamma_1} & -2i Q_{\gamma_0}^* \\
2i Q_{\gamma_0} & -2R_{\gamma_2}
\end{pmatrix}.
\end{align}
By comparing \crefrange{equation1: Cepsilon}{equation1: Sgamma} with \crefrange{equation2: Cepsilon}{equation2: Sgamma}, we see that \crefrange{equation1: wada decomposition}{equation2: wada decomposition} hold true.

Note that $\ell^2(\Z,\C^2) = \ell^2(\Z) \oplus \ell^2(\Z)$ can be identified with the orthogonal sum $\ell^2(\Z) \oplus \{0\} \oplus \{0\} \oplus \ell^2(\Z)$ through the following unitary transform;
\[
\ell^2(\Z,\C^2) \ni (\Psi_1, \Psi_2) \longmapsto (\Psi_1, 0,0,\Psi_2) \in \ell^2(\Z) \oplus \{0\} \oplus \{0\} \oplus \ell^2(\Z).
\]
It is then easy to see that the operator $Q_\epsilon$ admits the following block-operator matrix representations:
\begin{equation}
\label{equation1: representation of Qepsilon}
Q_\epsilon
=
\begin{pmatrix}
0 & Q_{\epsilon_0}^* \\
Q_{\epsilon_0} & 0
\end{pmatrix}_{\ell^2(\Z) \oplus \ell^2(\Z)}
=
\begin{pmatrix}
0 & 0 & 0 & Q_{\epsilon_0}^* \\
0 & 0 & \textbf{0} & 0 \\
0 & \textbf{0} & 0 & 0 \\
Q_{\epsilon_0} & 0 & 0 & 0
\end{pmatrix}_{\ell^2(\Z) \oplus \{0\} \oplus \{0\} \oplus \ell^2(\Z)},
\end{equation}
where $\textbf{0}$ denotes the zero operator of the form $\textbf{0} : \{0\} \to \{0\},$ and where $\ell^2(\Z) \oplus \{0\} = \ker(\varGamma_\epsilon - 1)$ and $\{0\} \oplus \ell^2(\Z) = \ker(\varGamma_\epsilon + 1).$ On the other hand, the same operator $Q_\epsilon$ is the imaginary part of $U_\epsilon = \varGamma_\epsilon \varGamma'_\epsilon,$ and so it admits the following off-diagonal block-operator matrix representation according to \cref{equation: representation of Q};
\begin{equation}
\label{equation2: representation of Qepsilon}
Q_\epsilon =
\begin{pmatrix}
0 & (Q''_{0})^* \\
Q''_{0} & 0
\end{pmatrix}_{\ker(\varGamma_\epsilon - 1) \oplus \ker(\varGamma_\epsilon + 1)}
=
\begin{pmatrix}
0 & (Q''_{0})^* \\
Q''_{0} & 0
\end{pmatrix}_{(\ell^2(\Z) \oplus \{0\}) \oplus (\{0\} \oplus \ell^2(\Z))}.
\end{equation}
It follows from \crefrange{equation1: representation of Qepsilon}{equation2: representation of Qepsilon} that $Q''_{0}$ is an off-diagonal block-operator matrix of the form;
\[
Q''_{0} =
\begin{pmatrix}
0  & \textbf{0} \\
Q_{\epsilon_0} & 0
\end{pmatrix}.
\]
Since $\textbf{0} : \{0\} \to \{0\}$ is a Fredholm operator of zero index, we have that $Q''_{0}$ is Fredholm if and only if $Q_{\epsilon_0}$ is Fredholm. In this case, we have $\ind Q''_{0} = \ind Q_{\epsilon_0} + \ind(\textbf{0}) = \ind Q_{\epsilon_0} + 0 = \ind Q_{\epsilon_0}.$ The first equality in \cref{equation: first index formula} follows from the unitary invariance of the Witten index \cref{equation: unitary invariance of the witten index}. An analogous argument can be used to show that the second equality in \cref{equation: first index formula} also holds true.
\end{proof}
\begin{remark}
The above derivation of \cref{equation: first index formula} only requires the sequences $p,q,a,b$ to be bounded, and so the existence of the two-sided limits \crefrange{equation: limits of real sequences}{equation: limits of complex sequences} turn out to be redundant. Note, however, that from here on we shall impose \crefrange{equation: limits of real sequences}{equation: limits of complex sequences} to prove the index formulas \crefrange{equation1: Witten index formula}{equation2: Witten index formula}.
\end{remark}

With \cref{lemma: wada decomposition} in mind, it remains to compute the Fredholm indices of the following operators:
\begin{align}
\label{equation2: definition of Qepsilon}
-2i Q_{\epsilon_0} &= p_+  p_+(\cdot + 1) b(\cdot + 1) e^{i \theta}L  - p_-p_-(\cdot - 1) b^*  e^{-i \theta(\cdot - 1)}L^*   - |q|(a + a(\cdot + 1)), \\
\label{equation2: definition of Qgamma}
2i Q_{\gamma_0}   &= a_+ a_+(\cdot - 1)q(\cdot - 1)^* e^{-i\phi}L^* - a_- a_-(\cdot + 1) q e^{i\phi(\cdot + 1)}  L     - |b|(p + p(\cdot - 1)),
\end{align}
where $\theta, \phi$ can be any $\R$-valued sequences satisfying $q(x) = |q(x)|e^{i \theta(x)}$ and $b(x) = |b(x)|e^{i \phi(x)}$ for each $x \in \Z.$ Note that \cref{theorem: two-phase case}(i) is not immediately applicable to the above strictly local operators, since it is not necessarily true that $\theta$ and $\phi$ are convergent. More precisely, for each $\star = \pm \infty,$ if $q(\star) \neq 0,$ then we can explicitly construct $\theta$ in such a way that $\theta(\star) = \lim_{x \to \star} \theta(x)$ holds true. On the other hand, if $q(\star) =  0,$ then the same conclusion cannot be drawn in general, because there are some pathological examples. Note that the same remark applies to $\phi.$ The purpose of the current section is to overcome this hindrance, which does not appear under the setting of \cite{Suzuki-Tanaka-2019,Matsuzawa-2020}, where $p,q$ are held constant.

\begin{lemma}
\label{lemma: phase problem}
The following assertions hold true:
\begin{enumerate}[(i)]
\item There exist two $\R$-valued sequences $\theta_+ = (\theta_+(x))_{x \in \Z}, \theta_- = (\theta_-(x))_{x \in \Z},$ such that
\begin{equation}
\label{equation: modified Qepsilon}
\begin{aligned}
e^{-i \theta_+} (-2i Q_{\epsilon_0}) e^{i \theta_-}
&= p_+ p_+(\cdot + 1) b(\cdot + 1) e^{i(\theta - \theta_+ + \theta_-(\cdot + 1))} L  \\
&- p_- p_-(\cdot - 1) b^* e^{-i (\theta(\cdot - 1) - \theta_-(\cdot - 1) + \theta_+ )} L^* \\
&- |q|(a + a(\cdot + 1))e^{i(\theta_- - \theta_+)},
\end{aligned}
\end{equation}
where the three coefficients of the above strictly local operator have the following limits for each $\star = \pm \infty:$
\begin{align}
\label{equation1: new phase}
&\lim_{x \to \star} \left( p_+(x) p_+(x + 1) b(x + 1)  e^{i (\theta(x) - \theta_+(x) + \theta_-(x + 1))} \right) =
(p(\star) + 1) b(\star)e^{i \theta(\star)}, \\
\label{equation2: new phase}
&\lim_{x \to \star} \left(- p_-(x) p_-(x - 1) b(x)^* e^{-i (\theta(x - 1) - \theta_-(x - 1)  + \theta_+(x))}\right) =
(p(\star) - 1) b(\star)^* e^{-i \theta(\star)}, \\
\label{equation3: new phase}
&\lim_{x \to \star}
\left(-|q(x)|(a(x) + a(x + 1))e^{i (\theta_-(x) - \theta_+(x))}\right)
= -2|q(\star)|a(\star).
\end{align}

\item There exist two $\R$-valued sequences $\phi_+ = (\phi_+(x))_{x \in \Z}, \phi_- = (\phi_-(x))_{x \in \Z},$ such that
\begin{equation}
\label{equation: modified Qgamma}
\begin{aligned}
e^{i \phi_+} (2i Q_{\gamma_0}) e^{-i \phi_-}
&= a_+ a_+(\cdot - 1) q(\cdot - 1)^* e^{-i(\phi - \phi_+ + \phi_-(\cdot - 1))} L^*  \\
&- a_- a_-(\cdot + 1) q e^{i (\phi(\cdot + 1) - \phi_-(\cdot + 1) + \phi_+ )} L \\
&- |b|(p + p(\cdot - 1))e^{i(\phi_+ - \phi_-)},
\end{aligned}
\end{equation}
where the three coefficients of the above strictly local operator have the following limits for each $\star = \pm \infty:$
\begin{align}
\label{equation4: new phase}
&\lim_{x \to \star} \left( a_+(x) a_+(x - 1) q(x - 1)^* e^{-i(\phi(x) - \phi_+(x) + \phi_-(x - 1))} \right) =
(a(\star) + 1) q(\star)^*e^{-i \phi(\star)}, \\
\label{equation5: new phase}
&\lim_{x \to \star} \left(- a_-(x) a_-(x + 1) q(x) e^{i (\phi(x + 1) - \phi_-(x + 1) + \phi_+(x) )}\right) =
(a(\star) - 1) q(\star) e^{i \phi(\star)}, \\
\label{equation6: new phase}
&\lim_{x \to \star}
\left(- |b(x)|(p(x) + p(x - 1))e^{i(\phi_+(x) - \phi_-(x))}\right)
= -2|b(\star)|p(\star).
\end{align}
\end{enumerate}
\end{lemma}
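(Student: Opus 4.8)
The plan is to reduce the whole statement to a single elementary telescoping construction, carried out once for part (i) and symmetrically for part (ii). The first observation is that it suffices to seek the correcting phases in the form $\theta_+ = \theta_-$ (a single $\R$-valued sequence, say $\rho$) in part (i), and $\phi_+ = \phi_-$ (say $\tau$) in part (ii). Indeed, with $\theta_+ = \theta_-$ the shift-free coefficient in \cref{equation: modified Qepsilon} collapses to $-|q|\,(a + a(\cdot+1))$, which converges to $-2|q(\star)|a(\star)$ directly from \crefrange{equation: limits of real sequences}{equation: limits of complex sequences}, so that \cref{equation3: new phase} holds with no phase left to control; and the coefficients of $L$ and $L^*$ in \cref{equation: modified Qepsilon} then depend on $\rho$ only through the single sequence $\mu(x) := \theta(x) - \rho(x) + \rho(x+1)$, the $L^*$-coefficient involving $\mu(x-1)$ exactly where the $L$-coefficient involves $\mu(x)$. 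Thus everything comes down to choosing $\rho$ so that $\mu$ is eventually constant near each of $\pm\infty$, with the two constants forced to be $\theta(+\infty)$ and $\theta(-\infty)$ respectively.

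This is a discrete-antiderivative problem with an explicit solution: put $\rho(0):=0$ and declare $\rho(x+1) - \rho(x) := \theta(+\infty) - \theta(x)$ for $x \geq 0$, and $\rho(x+1) - \rho(x) := \theta(-\infty) - \theta(x)$ for $x < 0$. This determines an $\R$-valued sequence $\rho$ --- possibly unbounded, but harmless, since $e^{i\rho}$ is a unitary multiplication operator on $\ell^2(\Z)$ regardless of boundedness --- and by construction $\mu(x) = \theta(+\infty)$ for every $x \geq 0$ and $\mu(x) = \theta(-\infty)$ for every $x < 0$. Substituting $\theta_+ = \theta_- = \rho$ into \cref{equation: modified Qepsilon} and recalling $p_\pm(x)^2 = 1 \pm p(x)$, the $L$-coefficient is $p_+(x)p_+(x+1)\,b(x+1)\,e^{i\mu(x)}$ and the $L^*$-coefficient is $-p_-(x)p_-(x-1)\,b(x)^*\,e^{-i\mu(x-1)}$; letting $x \to \star$ and invoking \crefrange{equation: limits of real sequences}{equation: limits of complex sequences} yields \cref{equation1: new phase} and \cref{equation2: new phase}. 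Crucially, when $q(\star) = 0$ one has $\theta(\star) = 0$ by \cref{equation: limits of theta}, so $e^{i\theta(\star)} = 1$ and the asserted limits $(p(\star)\pm 1)\,b(\star)\,e^{\pm i\theta(\star)}$ are precisely what this computation returns; no case split on whether $q(\star)$ vanishes is needed. This settles (i).

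Part (ii) follows the identical pattern with $(p,q,\theta,L)$ and $(a,b,\phi,L^*)$ interchanged: take $\phi_+ = \phi_- =: \tau$, which trivialises the shift-free coefficient in \cref{equation: modified Qgamma} and gives \cref{equation6: new phase}, and set $\tau(0):=0$ with $\tau(x) - \tau(x-1) := \phi(x) - \phi(+\infty)$ for $x \geq 1$ and $\tau(x) - \tau(x-1) := \phi(x) - \phi(-\infty)$ for $x \leq 0$, so that $\phi(x) - \tau(x) + \tau(x-1)$ equals $\phi(+\infty)$ for $x \geq 1$ and $\phi(-\infty)$ for $x \leq 0$; passing to the limit in \cref{equation: modified Qgamma} using $a_\pm(x)^2 = 1 \pm a(x)$ and \crefrange{equation: limits of real sequences}{equation: limits of complex sequences} then produces \cref{equation4: new phase} and \cref{equation5: new phase}, again uniformly in whether $b(\star)$ vanishes, by the convention $\phi(\star):=0$ in \cref{equation: limits of theta}. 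The only genuinely delicate point --- and the reason this lemma is isolated at all --- is that $\theta$ (resp.\ $\phi$) need not converge when $q(\star) = 0$ (resp.\ $b(\star) = 0$), so one may not simply take $\theta_\pm \equiv 0$ as one could in the constant-coefficient setting of \cite{Suzuki-Tanaka-2019,Matsuzawa-2020}; the device that resolves this is the freedom to let $\rho$ and $\tau$ be unbounded, so that the telescoping sum absorbs the whole, possibly wildly oscillating, phase sequence rather than merely its tail. Everything else --- that $e^{i\rho},e^{i\tau}$ are legitimate bounded multiplication operators and that the lone transitional value of the telescoped combination near $x = 0$ is irrelevant to the limits at $\pm\infty$ --- is routine.
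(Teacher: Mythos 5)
Your proof is correct, and it reaches the lemma by a genuinely different construction than the paper's. The paper defines the correcting phases piecewise, setting $\theta_{\pm}(x) := \theta(x)$ when $p(\star(x)) = \pm 1$ and $\theta_{\pm}(x) := 0$ otherwise (similarly for $\phi_{\pm}$), so that $\theta_+$ and $\theta_-$ are in general distinct and a correction is inserted only on the half-line where one is actually needed; the verification then splits into the three cases $|p(\star)|<1$, $p(\star)=+1$, $p(\star)=-1$, and in the degenerate cases the limits are rescued not by controlling the phase but by the vanishing of the accompanying factors $p_{\mp}(x)$ and $|q(x)|$ (for instance the diagonal phase $\theta_-(x)-\theta_+(x)$ need not converge there, but it is multiplied by $|q(x)|\to 0$). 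Your telescoping ansatz $\theta_+=\theta_-=\rho$ with $\rho(x+1)-\rho(x)$ chosen so that $\mu(x)=\theta(x)-\rho(x)+\rho(x+1)$ is literally equal to the constant $\theta(\star)$ of \cref{equation: limits of theta} on each half-line eliminates every case distinction, makes the diagonal term phase-free by construction, and does not even require the observation that $e^{i\theta(x)}\to e^{i\theta(\star)}$ when $q(\star)\neq 0$. The only cost is that $\rho$ and $\tau$ may be unbounded, and you correctly note this is harmless: only the unitary multiplication operators $e^{i\rho}$, $e^{i\tau}$ ever enter, both in the index argument immediately following the lemma and in the later computation of $\ess(Q)$, neither of which uses any property of $\theta_{\pm},\phi_{\pm}$ beyond the conclusions of the lemma itself. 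Both arguments are complete; yours is arguably cleaner and more uniform, while the paper's keeps the correcting phases bounded and supported only where a correction is required.
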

\begin{proof}
For each $x \in \Z$ we let
\[
\star(x) :=
\begin{cases}
+\infty, & x \geq 0, \\
-\infty, & x < 0,
\end{cases} \qquad
\theta_{\pm}(x) :=
\begin{cases}
\theta(x), & p(\star(x)) =    \pm 1, \\
0,         & p(\star(x)) \neq \pm1, \\
\end{cases} \qquad
\phi_{\pm}(x) :=
\begin{cases}
\phi(x),   & a(\star(x)) =    \pm 1, \\
0,         & a(\star(x)) \neq \pm 1.
\end{cases} 
\]
Note that \cref{equation: modified Qepsilon} follows from \cref{equation2: definition of Qepsilon}, and \cref{equation: modified Qgamma} follows from \cref{equation2: definition of Qgamma}. For each $x \in \Z$ we let
\[
\Lambda_1(x) := \theta(x) - \theta_+(x) + \theta_-(x + 1), \quad
\Lambda_2(x) := \theta(x - 1) - \theta_-(x - 1)  + \theta_+(x), \quad
\Lambda_3(x) := \theta_-(x)  - \theta_+(x).
\]

(i)  It suffices to prove the following equalities:
\begin{align}
\label{equation7: new phase}
&\lim_{x \to \star} \left(p_+(x) p_+(x + 1) e^{i \Lambda_1(x)} \right) =
(p(\star) + 1) e^{i \theta(\star)}, \\
\label{equation8: new phase}
&\lim_{x \to \star} \left(p_-(x) p_-(x - 1) e^{-i\Lambda_2(x)}\right) =
-(p(\star) - 1) e^{-i \theta(\star)}, \\
\label{equation9: new phase}
&\lim_{x \to \star}
\left(|q(x)|e^{i \Lambda_3(x)}\right) = |q(\star)|.
\end{align}
Let $\star = \pm \infty$ and $|x| > 1$ be fixed. If $|p(\star)| < 1,$ then $\theta_+(x) = \theta_-(x) = 0.$ In this case, \crefrange{equation4: new phase}{equation6: new phase} follow from the fact that as $x \to \star$ we have $\Lambda_j(x) \to \theta(\star)$ for each $j = 1,2,$ and $\Lambda_3(x) \to 0.$ On the other hand, if $|p(\star)| = 1,$ then $q(\star) = 0,$ and so \cref{equation9: new phase} becomes trivial. We need to check the following two cases separately: $p(\star) = -1$ and $p(\star) = +1.$ If $p(\star) = -1,$ then \cref{equation7: new phase} holds trivially, and \cref{equation8: new phase} follows from $\theta_-(x - 1) = \theta(x - 1)$ and $\theta_+(x) = 0  = \theta(\star),$ where the last equality follows from \cref{equation: limits of theta}.  Similarly, if $p(\star) = +1,$ then \cref{equation8: new phase} holds trivially, and \cref{equation7: new phase} follows from $\theta_+(x) = \theta(x)$ and $\theta_-(x+1) = 0  = \theta(\star).$

(ii) The claim follows from the following analogous equalities:
\begin{align*}
&\lim_{x \to \star} \left( a_+(x) a_+(x - 1) e^{-i(\phi(x) - \phi_+(x) + \phi_-(x - 1))} \right) =
(a(\star) + 1) e^{-i \phi(\star)}, \\
&\lim_{x \to \star} \left( a_-(x) a_-(x + 1) e^{i (\phi(x + 1) - \phi_-(x + 1) + \phi_+(x) )}\right) =
-(a(\star) - 1)e^{i \phi(\star)}, \\
&\lim_{x \to \star}
\left(|b(x)|e^{i(\phi_+(x) - \phi_-(x))}\right)
= |b(\star)|.
\end{align*}
We omit the proof.
\end{proof}

Since the Fredholm index is invariant under multiplication by invertible operators, we have
\[
\ind(e^{-i \theta_+} Q_{\epsilon_0} e^{i \theta_-}) = \ind Q_{\epsilon_0} = \ind(\varGamma,\varGamma'), \qquad
\ind(e^{i \phi_+} Q_{\gamma_0} e^{-i \phi_-}) = \ind Q_{\gamma_0} = \ind(\varGamma',\varGamma).
\]
We are now in a position to apply \cref{theorem: two-phase case}(i) to the strictly local operators $A_\epsilon := e^{-i \theta_+} Q_{\epsilon_0} e^{i \theta_-}$ and $A_\gamma := e^{i \phi_+} Q_{\gamma_0} e^{-i \phi_-}.$ Since the two-sided limits of the coefficients of $-2iA_\epsilon$ and $2iA_\gamma$ are given respectively by \crefrange{equation1: new phase}{equation3: new phase} and \crefrange{equation4: new phase}{equation6: new phase}, we introduce the following two functions for each $\star = \pm \infty$ according to \cref{equation: definition of hatA}:
\begin{align}
\label{equation1: definition of fepsilon}
-2if_\epsilon(z,\star) &:=
(p(\star) + 1) b(\star) e^{i \theta(\star)} z + (p(\star) - 1) b(\star)^* e^{-i \theta(\star)} z^*  -2|q(\star)|a(\star), \qquad z \in \T, \\
\label{equation2: definition of fgamma}
2if_\gamma(z,\star) &:=
(a(\star) + 1) q(\star)^* e^{-i \phi(\star)} z^* + (a(\star) - 1) q(\star) e^{i \phi(\star)} z   -2|b(\star)|p(\star), \qquad z \in \T.
\end{align}
It follows from \cref{theorem: two-phase case}(i) that $A _\epsilon$ is Fredholm (resp. $A _\epsilon$ is Fredholm) if and only if for each $\star = \pm \infty$ the function $f_\epsilon(\cdot,\star)$ is nowhere vanishing (resp. $f_\gamma(\cdot,\star)$ is nowhere vanishing). Moreover,
in this case,
\begin{align}
\label{equation1: witten index expressed as the difference of winding numbers}
\ind(\varGamma,\varGamma') &= \wn(f_\epsilon(\cdot, + \infty)) - \wn(f_\epsilon(\cdot, - \infty)), \\
\label{equation2: witten index expressed as the difference of winding numbers}
\ind(\varGamma',\varGamma) &= \wn(f_\gamma(\cdot, + \infty)) - \wn(f_\gamma(\cdot, - \infty)).
\end{align}
It remains to compute the winding numbers of $f_\epsilon(\cdot, \star), f_\gamma(\cdot, \star)$ by making use of the following elementary fact;

\begin{lemma}
\label{lemma: matsuzawa function is an ellipse}
For each $j=1,2,$ let $(\alpha_j, \beta_j, \Theta_j) \in \R \times \C \times [0,2\pi)$ be a fixed triple satisfying $\alpha_j^2 + |\beta_j|^2 = 1$ and $\beta_j = |\beta_j|e^{i \Theta_j}.$ Let $f : \T \to \C$ be the trigonometric polynomial defined by
\begin{equation}
\label{equation2: definition of matsuzawa function}
2 f(z) :=
(\alpha_1 + 1) \beta_2 e^{i \Theta_1} z + (\alpha_1 - 1) \beta_2^* e^{-i \Theta_1} z^*  -2|\beta_1|\alpha_2,
\qquad z \in \T.
\end{equation}
Then the function $\T \ni z \longmapsto f(z) \in \C$ is nowhere vanishing if and only if $|\alpha_1| \neq |\alpha_2|.$ In this case, we have
\begin{equation}
\label{equation1: winding number of matsuzawa function}
\wn(f) =
\begin{cases}
\sgn \alpha_1, & |\alpha_1| > |\alpha_2|, \\
0,             & |\alpha_1| < |\alpha_2|,
\end{cases}
\end{equation}
where the sign function $\sgn$ is defined by \cref{equation: definition of sign function}.
\end{lemma}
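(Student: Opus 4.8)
The lemma is named ``matsuzawa function is an ellipse'', and the plan follows that hint literally: I would show that the image $f(\T)$ is an ellipse in $\C$ with real centre, and then decide the position of the origin relative to it by a short computation using the constraints $\alpha_j^2 + |\beta_j|^2 = 1$.

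First, parametrise $z = e^{it}$ and use $\beta_j = |\beta_j| e^{i\Theta_j}$ to rewrite \cref{equation2: definition of matsuzawa function}. Setting $s := \Theta_1 + \Theta_2 + t$, the two exponential terms combine through the identity $(\alpha_1 + 1)e^{is} + (\alpha_1 - 1)e^{-is} = 2\alpha_1\cos s + 2i\sin s$, so that
\[
f(e^{it}) = \bigl(|\beta_2|\,\alpha_1\cos s - |\beta_1|\,\alpha_2\bigr) + i\,|\beta_2|\sin s .
\]
Thus $\Re f$ and $\Im f$ satisfy the equation of the ellipse centred at the real point $c := -|\beta_1|\,\alpha_2$ with horizontal semi-axis $A := |\alpha_1|\,|\beta_2|$ and vertical semi-axis $B := |\beta_2|$, and as $t$ runs once around $\T$ the point $f(e^{it})$ traverses this ellipse exactly once, counterclockwise when $\alpha_1 > 0$ and clockwise when $\alpha_1 < 0$. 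When $\alpha_1 = 0$ or $\beta_2 = 0$ the ``ellipse'' degenerates to a vertical segment, respectively to the single point $c$; I would treat these cases separately, the conclusions being immediate from the explicit formula for $f$.

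Next, locate the origin relative to this ellipse: it lies outside, on, or inside according as $|c| > A$, $|c| = A$, or $|c| < A$. Using $|\beta_j|^2 = 1 - \alpha_j^2$ and $\alpha_j \in \R$ one computes
\[
|c|^2 - A^2 = |\beta_1|^2\alpha_2^2 - |\beta_2|^2\alpha_1^2 = (1-\alpha_1^2)\alpha_2^2 - (1-\alpha_2^2)\alpha_1^2 = |\alpha_2|^2 - |\alpha_1|^2 .
\]
Hence the origin lies on the ellipse precisely when $|\alpha_1| = |\alpha_2|$ — so $f$ is nowhere vanishing if and only if $|\alpha_1| \neq |\alpha_2|$ — and the origin is an interior point when $|\alpha_1| > |\alpha_2|$ and an exterior point when $|\alpha_1| < |\alpha_2|$.

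Finally, for a convex closed curve traversed exactly once, the winding number about an exterior point is $0$ and about an interior point is $+1$ or $-1$ according to the orientation. Therefore $\wn(f) = 0$ when $|\alpha_1| < |\alpha_2|$, while when $|\alpha_1| > |\alpha_2|$ — which forces $\alpha_1 \neq 0$ and $\beta_2 \neq 0$, so the curve is a genuine ellipse with the orientation recorded above — we obtain $\wn(f) = \sgn\alpha_1$, which is \cref{equation1: winding number of matsuzawa function}. No step here is a genuine obstacle; the only care required is the bookkeeping of the orientation of the parametrised ellipse and the disposal of the two degenerate configurations, the latter handled by inspecting the explicit expression for $f$.
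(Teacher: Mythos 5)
Your argument is correct and is essentially the paper's own proof: both reduce $f(e^{it})$ to the parametrised ellipse $|\beta_2|\alpha_1\cos s + i|\beta_2|\sin s$ shifted to the real centre $-|\beta_1|\alpha_2$, read off the orientation as $\sgn\alpha_1$, handle the degenerate configurations $\alpha_1\beta_2=0$ separately, and locate the origin relative to the ellipse. The only cosmetic difference is that the paper phrases the interior/exterior criterion as $|\alpha_1\beta_2| \gtrless |\beta_1\alpha_2|$ and converts to $|\alpha_1| \gtrless |\alpha_2|$ at the end, whereas you compute $|c|^2 - A^2 = \alpha_2^2 - \alpha_1^2$ directly.
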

As we shall see below, if $\alpha_1\beta_2 \neq 0,$ then the image of the function $f$ turns out to be an ellipse. In this case, the curve $[0,2\pi] \ni t \longmapsto f(e^{it}) \in \C$ makes precise one revolution around the fixed point $-|\beta_1|\alpha_2$ on the real axis, and so we have $\wn(f) \in \{-1,0,+1\}.$
\begin{proof}
Let us first prove that $f$ is nowhere vanishing if and only if $|\alpha_1 \beta_2| \neq |\beta_1 \alpha_2|.$ In this case,
\begin{equation}
\label{equation2: winding number of matsuzawa function}
\wn(f) =
\begin{cases}
\sgn \alpha_1, & |\alpha_1 \beta_2| >  |\beta_1 \alpha_2|, \\
0,             & |\alpha_1 \beta_2| <  |\beta_1 \alpha_2|.
\end{cases}
\end{equation}
Let us consider the following function on $\R;$
\[
2F(s) := (|\alpha_1 \beta_2| + |\beta_2|) e^{is} + (|\alpha_1 \beta_2| - |\beta_2|)e^{-i s}
= 2 |\alpha_1 \beta_2| \cos s + i 2|\beta_2| \sin s, \qquad s \in \R.
\]
On one hand, if $\alpha_1 \beta_2 = 0,$ then the image of $F$ is a vertical line segment passing through the origin $0 + i0.$ On the other hand, if $\alpha_1 \beta_2 \neq 0,$ then the image of $F$ is an ellipse centred at the origin. For each $t \in [0,2\pi]$
\begin{align*}
2f(e^{i t}) + 2|\beta_1|\alpha_2
&= (\alpha_1 + 1) \beta_2 e^{i \Theta_1} e^{i t} + (\alpha_1 - 1) \beta_2^* e^{-i \Theta_1} e^{-i t} \\
&= (\sgn \alpha_1|\alpha_1| + 1) |\beta_2| e^{i(\Theta_1 + \Theta_2 + t)} + (\sgn \alpha_1|\alpha_1| - 1) |\beta_2| e^{-i(\Theta_1 + \Theta_2 + t)}  \\
&= \sgn \alpha_1 \cdot 2F(\sgn \alpha_1(\Theta_1 + \Theta_2 + t)).
\end{align*}
If $\alpha_1 \beta_2 = 0,$ then the image of the function $[0,2\pi] \ni t \longmapsto f(e^{i t}) \in \C$ coincides with that of the vertical line segment $[-1,1] \ni t \longmapsto -|\beta_1|\alpha_2 + i t|\beta_2| \in \C$ passing through $-|\beta_1|\alpha_2.$ That is, $f$ does not go through the origin if and only if $\beta_1\alpha_2 \neq 0 = \alpha_1 \beta_2,$ and in this case $\wn(f) = 0.$ This is a special case of \cref{equation2: winding number of matsuzawa function}. If $\alpha_1 \beta_2 \neq 0,$ then the image of the curve $[0,2\pi] \ni t \longmapsto f(e^{i t}) \in \C$ is the following ellipse with $\sgn \alpha_1$ being its winding number with respect to the center $-|\beta_1|\alpha_2$ on the real axis;
\[
\begin{tikzpicture}
\begin{axis}[axis y line=none,ticks=none,xmin=-8, xmax=8, ymin=-2, ymax=2, legend pos = north west, axis lines=center, xlabel=$\Re$, xlabel style={anchor = west}, width = 0.9\textwidth, height = 0.5\textwidth]
	\addplot [domain=-2*pi:2*pi,samples=50, smooth]({3*cos(deg(x))},{sin(deg(x))});
	\addplot [mark=none,forget plot, dashed] coordinates {(3, -1.5) (3, 1.5)};
	\addplot [mark=none,forget plot, dashed] coordinates {(-3, -1.5) (-3, 1.5)};
	\addplot [mark=none,forget plot, dashed] coordinates {(0, -1.5) (0, 1.5)};
	\draw [fill, black!20!blue] (-3,0) circle (1.5 pt) node [anchor = north east] {$-|\beta_1|\alpha_2 - |\alpha_1 \beta_2|$};
	\draw [fill, black!20!red] (3,0) circle (1.5 pt) node [anchor = north west] {$-|\beta_1|\alpha_2 + |\alpha_1 \beta_2|$};
	\draw [fill] (0,0) circle (1.5 pt) node [anchor = north west] {$-|\beta_1|\alpha_2$};
\end{axis}
\end{tikzpicture}
\]

If $|\alpha_1 \beta_2| >  |\beta_1 \alpha_2|,$ then the origin is inside the interior of the ellipse, and so $\wn(f) = \sgn \alpha_1.$ If $|\alpha_1 \beta_2| <  |\beta_1 \alpha_2|,$ then the origin is inside the exterior of the ellipse, and so $\wn(f) = 0.$ Clearly, the ellipse $f$ goes through the origin if and only if $|\alpha_1 \beta_2| =  |\beta_1 \alpha_2|.$

It remains to check that \cref{equation1: winding number of matsuzawa function} coincides with \cref{equation2: winding number of matsuzawa function}. If the notation $\lessgtr$ simultaneously denotes $>, =, <,$ then
$|\alpha_1 \beta_2| \lessgtr  |\beta_1 \alpha_2|$ if and only if  $|\alpha_1|^2 |\beta_2|^2 \lessgtr  |\beta_1|^2 |\alpha_2|^2$ if and only if $|\alpha_1|^2 \lessgtr |\alpha_2|^2$ if and only if $|\alpha_1| \lessgtr |\alpha_2|.$ The claim follows.
\end{proof}

\begin{proof}[Proof of \cref{theorem: split-step}(i)]
(1) Let $\alpha_1 := p(\star), \beta_1:= q(\star), \alpha_2:= a(\star), \beta_2:= b(\star), \Theta := \theta(\star).$ Then \cref{equation2: definition of matsuzawa function} becomes
\[
2f(z) =
(p(\star) + 1) b(\star) e^{i \theta(\star)} z + (p(\star) - 1)  b(\star)^* e^{-i \theta(\star)} z^*  -2|q(\star)|a(\star) = -2if_\epsilon(z,\star), \qquad z \in \T.
\]
That is, $f = -if_\epsilon(\cdot,\star),$ where the constant $-i$ does not play a significant role in this proof. It follows from \cref{lemma: matsuzawa function is an ellipse} that $f_\epsilon(\cdot,\star)$ is nowhere vanishing if and only if $|p(\star)| \neq |a(\star)|.$ In this case, we have
\begin{equation}
\label{equation: winding number of fepsilon}
\wn(f_\epsilon(\cdot,\star)) = \wn(f) =
\begin{cases}
\sgn p(\star), & |p(\star)| > |a(\star)|, \\
0,             & |p(\star)| < |a(\star)|.
\end{cases}
\end{equation}
The index formula \cref{equation1: Witten index formula} is now an immediate consequence of \cref{equation1: witten index expressed as the difference of winding numbers} and \cref{equation: winding number of fepsilon}.

(2) Let $\alpha_1 := a(\star), \beta_1:= b(\star), \alpha_2:= p(\star), \beta_2:= q(\star), \Theta := \phi(\star).$ Then \cref{equation2: definition of matsuzawa function} becomes
\[
2f(z) =
(a(\star) + 1) q(\star) e^{i \phi(\star)} z + ( a(\star) - 1) q(\star)^* e^{-i \phi(\star)} z^*  -2| b(\star)|p(\star) =
(2if_\gamma(z,\star))^*,
\qquad z \in \T.
\]
That is, $f^* = -if_\epsilon(\cdot,\star).$ It follows from \cref{lemma: matsuzawa function is an ellipse} that $f_\gamma(\cdot,\star)$ is nowhere vanishing if and only if $|a(\star)| \neq |p(\star)|.$ In this case, we have
\begin{equation}
\label{equation: winding number of fgamma}
\wn(f_\gamma(\cdot,\star)) = \wn(f^*) = -\wn(f) =
\begin{cases}
-\sgn a(\star), & |a(\star)| > |p(\star)|, \\
0,             & |a(\star)| < |p(\star)|,
\end{cases}
\end{equation}
where the last equality follows from $\wn(f^*) = -\wn(f).$ The index formula \cref{equation2: Witten index formula} is now an immediate consequence of \cref{equation2: witten index expressed as the difference of winding numbers} and \cref{equation: winding number of fgamma}.
\end{proof}

\cref{theorem: split-step}(i) can also be proved by a purely analytic method without relying on \cref{lemma: matsuzawa function is an ellipse}
(see, for example, \cite[\textsection 4]{Matsuzawa-2020}).

\subsection{Proof of Theorem B (ii)}
\label{section: essential spectrum}

\begin{proof}[Proof of \cref{theorem: split-step} (ii)]
It follows from a direct computation that $U = \varGamma \varGamma'$ is a strictly local operator of the following form;
\begin{align*}
U =
\begin{pmatrix}
p & q L \\
L^{-1} q^*  & -p(\cdot - 1)
\end{pmatrix}
\begin{pmatrix}
a & b^* \\
b & -a
\end{pmatrix}
=
\begin{pmatrix}
q Lb + pa &  - q  La + pb^*  \\
L^{-1}q^*a - p(\cdot - 1) b     &  L^{-1}q^* b^* +  p(\cdot - 1) a
\end{pmatrix}.
\end{align*}
For each $\star = \pm \infty$ and each $z \in \T,$ we introduce the following matrix according to \cref{equation: definition of hatA};
\[
\hat{U}(z, \star)
:=
\begin{pmatrix}
q(\star) b(\star)z + p(\star) a(\star)&  -(q(\star) a(\star) z - p(\star) b(\star)^*)  \\
q(\star)^* a(\star) z^{-1} - p(\star) b(\star)  & q(\star)^* b(\star)^* z^{-1}  + p(\star) a(\star)
\end{pmatrix}
=:
\begin{pmatrix}
X(z,\star) & -Y(z,\star)^* \\
Y(z,\star) & X(z,\star)^*
\end{pmatrix}.
\]
It follows from \cref{theorem: two-phase case}(ii) that the essential spectrum of the evolution operator $U$ is given by
\begin{equation}
\label{equation: definition of sigma pm}
\ess(U) = \sigma(+\infty) \cup \sigma(-\infty), \qquad \sigma(\pm \infty) := \bigcup_{z \in \T} \sigma \left(\hat{U}(z, \pm \infty)\right) =
\bigcup_{t \in [0,2\pi]} \sigma \left(\hat{U}(e^{it}, \pm \infty)\right).
\end{equation}
It remains to compute $\sigma(\star)$ for each $\star = \pm \infty.$ Recall that we have $q(\star) = |q(\star)| e^{i\theta(\star)}$ and $b(\star) = |b(\star)|e^{i\phi(\star)}$ by \cref{equation: limits of theta}. For each $t \in [0,2\pi]$ we have
\begin{align*}
X(e^{it},\star) &= q(\star) b(\star) e^{it} + p(\star) a(\star) = |q(\star) b(\star)| e^{i(t+\theta(\star) + \phi(\star))} + p(\star) a(\star), \\
Y(e^{it},\star) &= q(\star) a(\star) e^{it} - p(\star) b(\star)^* = |q(\star)| a(\star) e^{i(t+\theta(\star))} - p(\star) |b(\star)|e^{-i\phi(\star)}.
\end{align*}
We get the following characteristic equation for each $t \in [0,2\pi];$
\begin{align*}
\det\left(\hat{U}(e^{it}, \star) - \lambda\right)
&= \lambda^2 - (X(e^{it},\star) + X(e^{it},\star)^*)\lambda + X(e^{it},\star)X(e^{it},\star)^* + Y(e^{it},\star)Y(e^{it},\star)^* = 0.
\end{align*}
where $X(e^{it},\star)X(e^{it},\star)^* + Y(e^{it},\star)Y(e^{it},\star)^* = 1$ for each $t \in [0,2\pi]$ by a direct computation. The above characteristic equation becomes the following quadratic equation:
\[
\lambda^2 - 2 \left(p(\star)a(\star)  + |q(\star)  b(\star)| \cos(t + \theta(\star) + \phi(\star))\right) \lambda + 1 = 0, \qquad t \in [0,2\pi].
\]
Thus, if we let $\tau(t, \star) := p(\star) a(\star) + |q(\star) b(\star)| \cos(t + \theta(\star) + \phi(\star))$ for each $t \in [0,2\pi],$ then the above equation has the following two solutions:
\[
\lambda_\pm(t, \star) := \tau(t, \star)  \pm \sqrt{\tau(t, \star)^2 - 1},
\]
where $\{\tau(t,\star)\}_{t \in [0,2\pi]} = [p(\star)a(\star) - |q(\star) b(\star)|, p(\star)a(\star) + |q(\star) b(\star)|] \subseteq [-1,1],$ since
\[
\sup_{t \in [0,2\pi]}
|\tau(t,\star)|
\leq |p(\star)a(\star)| + |q(\star) b(\star)|
\leq \frac{|p(\star)|^2 + |a(\star)|^2}{2} + \frac{|q(\star)|^2 + |b(\star)|^2}{2} = 1.
\]
It follows that
\[
\sigma(\star)
= \bigcup_{t \in [0,2\pi]}\sigma \left(\hat{U}(e^{it}, \star) \right) = \{\lambda_\pm(t, \star)\}_{t \in [0,2\pi]} =  \left\{x \pm i\sqrt{1 - x^2}\right\}_{ x \in [p(\star)a(\star) - |q(\star) b(\star)|, p(\star)a(\star) + |q(\star) b(\star)|]}.
\]
Note that $\pm 1 \in \sigma(\star)$ if and only if $p(\star)a(\star) \pm |q(\star) b(\star)| = \pm 1.$
It can be shown that the last equations are equivalent to $(p(\star) \mp a(\star))^2 = 0,$ and so it follows from \cref{equation: definition of sigma pm} that $\ess(U)$ does not contain both $-1,+1$ if and only if $|p(\star)| \neq |a(\star)|$ for each $\star = \pm \infty.$ Finally, the formula \cref{equation: essential spectrum of U} follows from the fact that $x \in [p(\star)a(\star) - |q(\star) b(\star)|, p(\star)a(\star) + |q(\star) b(\star)|]$ if and only if $\sgn(p(\star)a(\star))x \in I(\star)$ for each $x \in [-1,1].$
\end{proof}

\subsection{Several concluding remarks}
\label{section: concluding remarks}

\subsubsection{The essential spectrum of the imaginary part}

\begin{lemma}
\label{lemma: characterisation of unitary chiral pair}
Let $\cH$ be an abstract Hilbert space, and let $(\varGamma, \varGamma')$ be a chiral pair on $\cH.$ Let $U := \varGamma \varGamma'$ be the associated evolution operator, and let $R,Q$ be the real and imaginary parts of $U$ respectively. Then
\begin{align}
\label{equation: spectral mapping theorem for essential of R}
\ess(R) = \left\{\frac{z + z^*}{2} \mid z \in \ess(U)\right\}, \\
\label{equation: spectral mapping theorem for essential of Q}
\ess(Q) = \left\{\frac{z - z^*}{2i} \mid z \in \ess(U)\right\},
\end{align}
\end{lemma}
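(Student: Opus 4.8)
The plan is to realise $R$ and $Q$ as continuous functions of the unitary operator $U$ and then transport the resulting identities to the Calkin algebra. First I would record that $U = \varGamma\varGamma'$ is unitary, being a product of two unitary involutions, so that $\sigma(U) \subseteq \T$. Introducing the two real-valued functions $u, v \in C(\T)$ defined by $u(z) := (z + z^{*})/2$ and $v(z) := (z - z^{*})/(2i)$, the continuous functional calculus for $U$ gives $u(U) = (U + U^{*})/2 = R$ and $v(U) = (U - U^{*})/(2i) = Q$; in particular $R$ and $Q$ lie in the abelian $C^{*}$-algebra generated by $U$ and the identity.

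Next I would invoke Atkinson's theorem, which identifies $\ess(A)$ with the spectrum $\sigma(\pi(A))$ of the image of $A$ under the canonical quotient $*$-homomorphism $\pi \colon \cB(\cH) \to \cB(\cH)/\cK(\cH)$ onto the Calkin algebra; this is immediate from the definition of $\ess$ adopted in \cref{section: analysis of strictly local operators}. Since $\pi$ is a unital $*$-homomorphism of $C^{*}$-algebras, it intertwines the continuous functional calculus, so $\pi(R) = \pi(u(U)) = u(\pi(U))$ and $\pi(Q) = \pi(v(U)) = v(\pi(U))$, where $\pi(U)$ is a unitary element of the Calkin algebra with $\sigma(\pi(U)) = \ess(U) \subseteq \T$, so that $u$ and $v$ are continuous on all of $\sigma(\pi(U))$. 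The spectral mapping theorem for normal elements of a $C^{*}$-algebra then yields $\ess(R) = \sigma(\pi(R)) = \sigma(u(\pi(U))) = u(\sigma(\pi(U))) = u(\ess(U))$, which is exactly \cref{equation: spectral mapping theorem for essential of R}, and likewise $\ess(Q) = v(\ess(U))$, which is \cref{equation: spectral mapping theorem for essential of Q}.

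I do not expect a genuine obstacle here: the only points requiring care are verifying that $U$ is honestly unitary rather than merely normal, so that $u$ and $v$ are defined on the whole of $\sigma(U)$ and of $\sigma(\pi(U))$, and quoting the compatibility of continuous functional calculus with the $*$-homomorphism $\pi$ together with the $C^{*}$-algebraic spectral mapping theorem, both of which are standard (see, e.g., \cite{Murphy-2006}). Should one wish to avoid functional calculus entirely, an elementary substitute is available: unitarity of $U$ forces $[R,Q] = 0$ and $R^{2} + Q^{2} = 1$, and from these one checks directly that an orthonormal sequence $(\psi_{k})$ with $(U - z)\psi_{k} \to 0$ satisfies $(R - \Re z)\psi_{k} \to 0$ and $(Q - \Im z)\psi_{k} \to 0$, giving the inclusions $\supseteq$, while the reverse inclusions follow by extracting from a singular Weyl sequence for $R$ at $\mu \in [-1,1]$ a subsequence along which $Q\psi_{k}$ is asymptotically an eigenvector of modulus $\sqrt{1 - \mu^{2}}$, so that $\mu + i\sqrt{1 - \mu^{2}}$ or $\mu - i\sqrt{1 - \mu^{2}}$ lies in $\ess(U)$. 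The functional-calculus argument is the cleaner of the two, so that is the route I would present.
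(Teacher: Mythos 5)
Your argument is correct and follows essentially the same route as the paper: pass to the Calkin algebra, use that the quotient map intertwines the functional calculus of the unitary $U$, and apply the spectral mapping theorem to the functions $u(z)=(z+z^*)/2$ and $v(z)=(z-z^*)/(2i)$ (the paper phrases this for trigonometric polynomials $p(U)$, which covers exactly these two cases). The extra care you take in citing Atkinson's theorem and the compatibility of $\pi$ with the continuous functional calculus is a harmless refinement, not a different method.
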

\begin{proof}
Let $\cB(\cH) \ni A \longmapsto [A] \in \cB(\cH)/\cK(\cH)$ be the natural surjection onto the Calkin algebra $\cB(\cH)/\cK(\cH).$ If $p : \T \to \C$ is a trigonometric polynomial of the form $p(z) = \sum_{y=-k}^k a(y) z^y$ for each $z \in \T,$ then
\begin{equation}
\ess(p(U))
= \sigma \left(\left[\sum_{y=-k}^k a(y) U^y\right] \right)
= \sigma \left(\sum_{y=-k}^k a(y) \left[U\right]^y \right)
= \sigma \left( p([U]) \right)
= p(\sigma \left( [U] \right))
= p( \ess(U)),
\end{equation}
where the second last equality follows from the spectral mapping theorem. If we let $p(z) := (z + z^*)/2$ (resp. $p(z) := (z - z^*)/(2i)$) for each $z \in \T,$ then $p(U) = R$ (resp. $p(U) = Q$). The claim follows.
\end{proof}

It follows from \cref{equation: spectral mapping theorem for essential of Q} that the following characterisations holds true:
\begin{equation}
\label{equation: characterisation of Fredholmness}
\mbox{The chiral pair } (\varGamma,\varGamma') \mbox{ is Fredholm if and only if } 0 \notin \ess(Q) \mbox{ if and only if } -1,+1 \notin \ess(U).
\end{equation}
Note that \cref{equation: characterisation of Fredholmness} explains why we have the same characterisation $|p(\star)| \neq |a(\star)|$ for each $\star = \pm \infty$ in \cref{theorem: split-step}(i) and \cref{theorem: split-step}(ii).

\begin{theorem}
If $(\varGamma,\varGamma' )$ is the chiral pair in \cref{theorem: split-step} and if $Q$ is the imaginary part of the evolution operator $U := \varGamma \varGamma' ,$ then the following formula holds true;
\begin{equation}
\label{equation: formula for the essential spectrum of Q}
\ess(Q)
= \bigcup_{\star = \pm \infty} \left\{\pm |f_\epsilon(z,\star)|\right\}_{z \in \T}
= \bigcup_{\star = \pm \infty} \left\{\pm |f_\gamma(z,\star)|\right\}_{z \in \T},
\end{equation}
where $f_\epsilon(z,\star)$ and $f_\gamma(z,\star)$ are given respectively by \cref{equation1: definition of fepsilon} and \cref{equation2: definition of fgamma}.
\end{theorem}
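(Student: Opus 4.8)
The plan is to compute $\ess(Q)$ by combining the spectral mapping identity \cref{equation: spectral mapping theorem for essential of Q} from \cref{lemma: characterisation of unitary chiral pair} with the explicit shape of $\ess(U)$ that was already extracted inside the proof of \cref{theorem: split-step}(ii), and then to recognise the result as the set on the right-hand side of \cref{equation: formula for the essential spectrum of Q}. First I would apply \cref{equation: spectral mapping theorem for essential of Q} to write $\ess(Q) = \{(z - z^{*})/(2i) \mid z \in \ess(U)\}$. Recall from the proof of \cref{theorem: split-step}(ii) that $\ess(U) = \sigma(+\infty) \cup \sigma(-\infty)$, where for each $\star = \pm\infty$ one has $\sigma(\star) = \{x \pm i\sqrt{1-x^{2}} \mid x \in J(\star)\}$ with $J(\star) := [\, p(\star)a(\star) - |q(\star)b(\star)|,\ p(\star)a(\star) + |q(\star)b(\star)| \,] \subseteq [-1,1]$. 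Since $(z - z^{*})/(2i) = \pm\sqrt{1-x^{2}}$ whenever $z = x \pm i\sqrt{1-x^{2}}$, this step yields at once $\ess(Q) = \bigcup_{\star = \pm\infty}\{\pm\sqrt{1-x^{2}} \mid x \in J(\star)\}$.

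It then remains to identify, for each $\star = \pm\infty$, the sets $\{\pm|f_\epsilon(z,\star)|\}_{z \in \T}$ and $\{\pm|f_\gamma(z,\star)|\}_{z \in \T}$ with $\{\pm\sqrt{1-x^{2}} \mid x \in J(\star)\}$. Writing $\tau(t,\star) := p(\star)a(\star) + |q(\star)b(\star)|\cos(t + \theta(\star) + \phi(\star))$ for the quantity appearing in the proof of \cref{theorem: split-step}(ii), the heart of the matter is the ``Pythagorean'' identity
\[
|f_\epsilon(e^{it},\star)|^{2} = 1 - \tau(t,\star)^{2}, \qquad t \in [0,2\pi],
\]
which is the exact counterpart at the level of symbols of the relation $X(e^{it},\star)X(e^{it},\star)^{*} + Y(e^{it},\star)Y(e^{it},\star)^{*} = 1$ used in the proof of \cref{theorem: split-step}(ii). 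I would verify it either by a direct expansion of $|{-2if_\epsilon(e^{it},\star)}|^{2}$ as $|Ae^{it} + Be^{-it} + C|^{2}$, with $A := (p(\star)+1)b(\star)e^{i\theta(\star)}$, $B := (p(\star)-1)b(\star)^{*}e^{-i\theta(\star)}$ and $C := -2|q(\star)|a(\star)$, using the constraints $p(\star)^{2} + |q(\star)|^{2} = 1$ and $a(\star)^{2} + |b(\star)|^{2} = 1$ (whence $p(\star)^{2} - 1 = -|q(\star)|^{2}$) to see the $\cos$-linear contributions collapse to $-8p(\star)a(\star)|q(\star)b(\star)|\cos(\cdot)$ and the remainder to $4 - 4p(\star)^{2}a(\star)^{2} - 4|q(\star)b(\star)|^{2}\cos^{2}(\cdot)$, so that altogether $|{-2if_\epsilon(e^{it},\star)}|^{2} = 4(1 - \tau(t,\star)^{2})$; or, more slickly, by reading the identity straight off the ellipse parametrization $2f(e^{it}) + 2|\beta_{1}|\alpha_{2} = \sgn\alpha_{1}\cdot 2F(\sgn\alpha_{1}(\Theta_{1} + \Theta_{2} + t))$ recorded in the proof of \cref{lemma: matsuzawa function is an ellipse}, specialised to $\alpha_{1} = p(\star)$, $\beta_{1} = q(\star)$, $\alpha_{2} = a(\star)$, $\beta_{2} = b(\star)$, $\Theta_{1} = \theta(\star)$ and $f = -if_\epsilon(\cdot,\star)$. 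Since $t \mapsto \cos(t + \theta(\star) + \phi(\star))$ is onto $[-1,1]$, the map $t \mapsto \tau(t,\star)$ is onto $J(\star)$, hence $\{|f_\epsilon(z,\star)|\}_{z \in \T} = \{\sqrt{1-x^{2}} \mid x \in J(\star)\}$, and adjoining the opposite signs gives the first equality in \cref{equation: formula for the essential spectrum of Q}.

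For the second equality I would run the same computation for $f_\gamma$. By the relation $2f(z) = (2if_\gamma(z,\star))^{*}$ noted in the proof of \cref{theorem: split-step}(i) one has $|f_\gamma(\cdot,\star)| = |f|$, and applying the identity above with the roles of the pairs $(p,q)$ and $(a,b)$ exchanged (i.e. $\alpha_{1} = a(\star)$, $\beta_{1} = b(\star)$, $\alpha_{2} = p(\star)$, $\beta_{2} = q(\star)$, $\Theta_{1} = \phi(\star)$) gives $|f_\gamma(e^{it},\star)|^{2} = 1 - (a(\star)p(\star) + |b(\star)q(\star)|\cos(t + \phi(\star) + \theta(\star)))^{2} = 1 - \tau(t,\star)^{2}$, the very same function of $t$. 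Thus $\{\pm|f_\gamma(z,\star)|\}_{z \in \T} = \{\pm\sqrt{1-x^{2}} \mid x \in J(\star)\}$ as well, and both displayed equalities in the statement follow. The only genuinely laborious point is the algebraic verification of the Pythagorean identity, and even that becomes mechanical once one substitutes $|q(\star)|^{2} = 1 - p(\star)^{2}$ and $|b(\star)|^{2} = 1 - a(\star)^{2}$; no separate edge-case analysis is needed, since $|f_\epsilon(\cdot,\star)|$ (resp. $|f_\gamma(\cdot,\star)|$) vanishes somewhere on $\T$ precisely when $\pm 1 \in \ess(U)$, in agreement with both sides.
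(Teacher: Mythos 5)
Your argument is correct, but it follows a genuinely different route from the paper's. The paper proves the formula by a second application of \cref{theorem: two-phase case}(ii): starting from the off-diagonal representation of $Q_\epsilon = \epsilon^* Q \epsilon$ in \cref{lemma: wada decomposition}, it conjugates by the phase operators $e^{\pm i\theta_\pm}$ of \cref{lemma: phase problem} to obtain a strictly local operator with convergent coefficients whose $2\times 2$ symbol is the off-diagonal matrix with entries $f_\epsilon(\cdot,\star)$ and $f_\epsilon(\cdot,\star)^*$; the spectrum of that symbol is $\{\pm|f_\epsilon(z,\star)|\}$ by inspection, so the formula drops out with no computation of $|f_\epsilon|$ at all. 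You instead combine the spectral mapping identity \cref{equation: spectral mapping theorem for essential of Q} with the description of $\ess(U)$ already obtained in the proof of \cref{theorem: split-step}(ii), and then close the gap with the identity $|f_\epsilon(e^{it},\star)|^2 = 1-\tau(t,\star)^2$, which indeed reduces, after substituting $|q(\star)|^2 = 1-p(\star)^2$ and $|b(\star)|^2 = 1-a(\star)^2$, to the true statement $|b(\star)|^2 + |q(\star)|^2 a(\star)^2 = 1 - p(\star)^2 a(\star)^2$; the $f_\gamma$ case is symmetric under exchanging $(p,q)$ with $(a,b)$, as you note. What each approach buys: the paper's route avoids any explicit evaluation of $|f_\epsilon|$ but must re-invoke the phase-fixing lemma and the Toeplitz machinery; yours reuses work already done for $\ess(U)$ and needs only the elementary spectral mapping lemma, at the cost of the extra Pythagorean identity, and it has the added merit of making transparent why $\{\pm|f_\epsilon|\} = \{\pm|f_\gamma|\}$ pointwise on $\T$ (both equal $\pm\sqrt{1-\tau^2}$ for the same $\tau$), a fact the paper only asserts ``follows similarly.''
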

\begin{proof}
It follows from \cref{lemma: wada decomposition} that the block-operator matrix representation of $Q_\epsilon := \epsilon^* Q \epsilon$ is given explicitly by the last equality in \cref{equation1: wada decomposition}. Let $\theta_+, \theta_-$ be as in \cref{lemma: phase problem}, and let us consider the following unitary transform of $Q_\epsilon;$
\[
A :=
\begin{pmatrix}
e^{-i \theta_-} & 0 \\
0 & e^{-i \theta_+}
\end{pmatrix}
\begin{pmatrix}
0 & Q_{\epsilon_0}^* \\
Q_{\epsilon_0} & 0
\end{pmatrix}
\begin{pmatrix}
e^{i \theta_-} & 0 \\
0 & e^{i \theta_+}
\end{pmatrix}
=
\begin{pmatrix}
0 & e^{-i \theta_-} Q_{\epsilon_0}^* e^{i \theta_+} \\
e^{-i \theta_+} Q_{\epsilon_0} e^{i \theta_-} & 0
\end{pmatrix}.
\]
We are in a position to apply \cref{theorem: two-phase case}(ii) to the above strictly local operator. We introduce the following matrix-valued function according to \cref{equation: definition of hatA};
\[
\hat{A}(\cdot, \star) :=
\begin{pmatrix}
0 & f_\epsilon(\cdot,\star)^* \\
f_\epsilon(\cdot,\star) & 0
\end{pmatrix}, \qquad \star = \pm \infty,
\]
where $\sigma \left(\hat{A}(z, \star) \right) = \{\pm |f_\epsilon(z,\star)|\}$ for each $z \in \T$ and each $\star = \pm \infty.$ It follows from \cref{theorem: two-phase case}(ii) that
\[
\ess(A)
= \bigcup_{z \in \T} \sigma \left(\hat{A}(\cdot, -\infty)\right)  \cup \bigcup_{z \in \T} \sigma \left(\hat{A}(\cdot, +\infty)\right)
= \bigcup_{z \in \T} \{\pm |f_\epsilon(z,-\infty)|\} \cup \bigcup_{z \in \T} \{\pm |f_\epsilon(z,+\infty)|\}.
\]
Since the essential spectrum is invariant under unitary transforms, we have $\ess(Q) = \ess(Q_\epsilon) = \ess(A).$ It follows that the first equality in \cref{equation: formula for the essential spectrum of Q} holds true. The second equality follows similarly.
\end{proof}

\subsubsection{Topologically protected bound states}
\label{section: topologically protected bound states}

What follows is the subject of another paper in preparation. Let $(\varGamma, \varGamma')$ be a chiral pair defined on an abstract Hilbert space $\cH,$ and let $R, Q$ be the real and imaginary parts of the evolution operator $U := \varGamma \varGamma'$ respectively. With the block-operator matrix representations \crefrange{equation: representation of R}{equation: representation of U} in mind, we define the following two indices:
\begin{equation}
\label{equation: definition of the GW indices}
\ind_\pm (\varGamma, \varGamma') := \dim \ker(R_1 \mp 1) -  \dim \ker(R_2 \mp 1).
\end{equation}
The operator $U = R + iQ$ is unitary, and so $[R,Q] = 0$ and $R^2 + Q^2 = 1.$ It immediately follows from the second equality that the two indices $\ind_\pm (\varGamma, \varGamma')$ given by \cref{equation: definition of the GW indices} are well-defined, if the chiral pair $(\varGamma, \varGamma')$ is Fredholm. In this case, it is not difficult to prove:
\begin{align}
\label{equation: GW indices expressed as the sum and difference of Witten indices}
\ind_\pm (\varGamma, \varGamma') &= \frac{\ind(\varGamma, \varGamma') \pm \ind(\varGamma', \varGamma)}{2}, \\
\label{equation: topologically protected bound states}
|\ind_\pm (\varGamma, \varGamma')| &\leq \dim \ker(U \mp 1),
\end{align}
where non-zero vectors in $\ker(U \mp 1)$ may be referred to as \textbi{topologically protected bound states} as in Physics literature (see, for example, \cite{Kitagawa-Rudner-Berg-Demler-2010,Kitagawa-Broome-Fedrizzi-Rudner-Berg-Kassal-Aspuru-Demler-White-2012}). The three indices $\ind(\varGamma,\varGamma'), \ind_\pm(\varGamma,\varGamma')$ coincide with the \textbi{symmetry indices} $\textrm{si}(U), \textrm{si}_\pm(U)$ discussed in \cite{Cedzich-Geib-Grunbaum-Stahl-Velazquez-Werner-Werner-2018,Cedzich-Geib-Stahl-Velazquez-Werner-Werner-2018}.

If $(\varGamma,\varGamma' )$ is the chiral pair in \cref{theorem: split-step}, then it follows from \cref{theorem: split-step}(i) and \cref{equation: GW indices expressed as the sum and difference of Witten indices} that $\ind_\pm (\varGamma,\varGamma' )$ are given explicitly by the following formulas:
\begin{equation}
\label{equation: GW index for ssqw}
2 \cdot \ind_\pm (\varGamma,\varGamma' ) =
\begin{cases}
\mp \sgn a(+\infty) \pm \sgn a(-\infty),                &  |p(-\infty)| < |a(-\infty)| \mbox{ and } |p(+\infty)| < |a(+\infty)|, \\
+\sgn p(+\infty) \pm \sgn a(-\infty), &  |p(-\infty)| < |a(-\infty)| \mbox{ and }  |p(+\infty)| > |a(+\infty)|, \\
-  \sgn p(-\infty) \mp \sgn a(+\infty), &  |p(-\infty)| > |a(-\infty)| \mbox{ and }  |p(+\infty)| < |a(+\infty)|, \\
+\sgn p(+\infty) - \sgn p(-\infty), &  |p(-\infty)| >  |a(-\infty)| \mbox{ and } |p(+\infty)| > |a(+\infty)|,
\end{cases}
\end{equation}
where we assume $|p(\star)| \neq |a(\star)|$ for each $\star = \pm \infty.$ With \cref{equation: topologically protected bound states} in mind, the formula \cref{equation: GW index for ssqw} can be used to give a lower bound for the number of topologically protected bound states associated with this explicit one-dimensional quantum walk model. This estimate is robust in the sense that the lower bounds $|\ind_\pm (\varGamma,\varGamma' )|$ depend only on the four asymptotic values $p(\pm \infty), a(\pm \infty).$

Note that an estimate of this kind can also be obtained via the \textbi{spectral mapping for discrete-time quantum walks} discussed in \cite{Segawa-Suzuki-2016,Segawa-Suzuki-2019}. Indeed, \cite[Theorem 3.1(iii)]{Suzuki-2019} states that if $U := \varGamma \varGamma'$ denotes the evolution operator of an abstract chiral pair $(\varGamma, \varGamma'),$ where $\varGamma'$ is expressed as $\varGamma' = 2d^*d - 1$ for some fixed co-isometry (i.e. $dd^* = 1$), then the following equalities hold true:
\[
\dim \ker(U \mp 1) = m_\pm + M_\pm,
\]
where $m_\pm := \dim \ker(d \varGamma d^* \mp 1)$ and $M_\pm := \dim \ker(\varGamma \mp 1) \cap \ker d.$ For example, as in \cite[Remark 6.2]{Fuda-Funakawa-Suzuki-2018}, if $U = \varGamma \varGamma' $ is the evolution operator associated with the chiral pair $(\varGamma,\varGamma' ),$ defined in \cref{theorem: split-step}, with $p, q$ being held constant, then one can solve some first-order difference equations to show that there exists a well-defined constant $k > 0,$ such that the condition $|q| < k$ ensures $\dim \ker(U \mp 1) = 1.$

\bibliographystyle{alpha}
\bibliography{Bibliography}

\end{document}